\newcounter{linenum}
\newcommand{\ket}[1]{\left\lvert #1 \right\rangle}
\newcommand{\bra}[1]{\left\langle #1 \right\lvert}
\DeclareMathOperator{\Tr}{\operatorname{Tr}}
\newcommand{\AR}[2][c]{$$\begin{array}[#1]{lllllllllllllll}#2\end{array}$$}
\def\EQ#1{\begin{eqnarray}#1\end{eqnarray}}
\newtheorem{theorem}{Theorem}
\newtheorem{lemma}{Lemma}
\newtheorem{definition}{Definition}
\begin{document}

\allowdisplaybreaks[3]
\frenchspacing

\title{Optimised resource construction for verifiable quantum computation}

\author[1,2]{Elham Kashefi}
\author[1]{Petros Wallden}
\affil[1]{School of Informatics, University of Edinburgh,}
\affil[ ]{10 Crichton Street, Edinburgh EH8 9AB, UK}
\affil[2]{CNRS LTCI, Departement Informatique et Reseaux,}
\affil[ ]{Telecom ParisTech, Paris CEDEX 13, France}

\hyphenation{avenues}
\hyphenation{research}
\hyphenation{interact}
\hyphenation{machine}
\hyphenation{machines}
\hyphenation{giving}
\hyphenation{paper}
\hyphenation{encoun-ter}
\hyphenation{encoun-ters}
\hyphenation{expe-ri-ence}
\hyphenation{analyse}
\hyphenation{analysis}
\hyphenation{remains}
\hyphenation{logical}
\hyphenation{cannot}

\clubpenalty 10000
\widowpenalty 10000

\date{}

\maketitle

\begin{abstract}

Recent developments make the possibility of achieving scalable quantum networks and quantum devices closer. From the computational point of view these emerging technologies become relevant when they are no longer classically simulatable. Hence a pressing challenge is the construction of practical methods to verify the correctness of the outcome produced by universal or non-universal quantum devices. A promising approach that has been extensively explored is the scheme of verification via encryption through blind quantum computing initiated by Fitzsimons and Kashefi. We present here a new construction that simplifies the required resources for any such verifiable blind quantum computing protocol. We obtain an overhead that is linear in the size of the input, while the security parameter remains independent of the size of the computation and can be made exponentially small. Furthermore our construction is generic and could be applied to any non-universal scheme with a given underlying graph.  
\end{abstract}

\section{Introduction}

It is widely believed that quantum computers and generally quantum devices, can outperform their classical counterparts. In particular, there are problems that it is believed a quantum computer could solve efficiently, while a classical computer would require exponentially (in the size of the input) long time.  In general, it is not possible to classically simulate such computations and therefore in order to verify that a generic quantum device functions correctly, we need to resort to different techniques. Currently the most efficient ways to verify a quantum computation, is to employ cryptographic methods, where we have an almost classical verifier that executes the computation using an untrusted but fully quantum prover. There has been a number of such verification methods \cite{fk,abe,efk,kdk2015,KKD14,tomo2014,hm2015,ruv2,mckague,gkw2015,hpf2015,DFPR13,Broadbent2015,GHZ_verification,MM2015,exp_secret,BGS13} where generally there exists a trade-off between the practicality of the scheme versus their generality, trust assumptions and security level. It is the target of this work to both reduce the experimental requirements of the most general schemes and to achieve further improvements in the more restricted schemes. In order to make quantum verification schemes practical number of aspects can be considered: 

\begin{itemize}
\item whether the verifier's devices are trusted or not;
\item which are the verifier's quantum technological requirements;
\item is the scheme suitable for any universal computation or for only a restricted class;
\item is the output of the quantum computation classical or quantum;
\item does the protocol tolerates errors due to noise; 
\item how does the probability of failure scale;
\item which is the classical and quantum overhead and which is the round complexity of the scheme; 
\item whether the quantum communication needs to be done during the computation (online) or can it be done at some earlier stage (offline);
\end{itemize}
A full review of all necessary parameters are beyond the scope of this paper but it is worth noticing that all the above aspects have been addressed using protocols based on verification via blind quantum computing \cite{fk,efk,kdk2015,KKD14,gkw2015,hpf2015,DFPR13}. We will refer to this family of protocols collectively as VBQC schemes where the key idea is based on hiding the underlying computation (also known as blindness). This would allow the verifier to encode simple trap computations within a general computation that runs on a remote device in such a way that the computation is not affected, while revealing no information to the device. The correctness of the general computation is then tested via the verification of the trap computation. The latter is significantly less costly and thus leads to an efficient scheme (essentially similar to an error detection code). What makes the procedure work is the blindness that hides the trap computation from the actual one. To elaborate further on the security parameter scaling, consider the following informal definition of \emph{verification} that we formalise later (for details see also \cite{fk}).

\begin{definition}\label{epsilon-verification}
A quantum computation protocol is $\epsilon$-verifiable if the probability of accepting an incorrect output (classical or quantum) for any choice of the prover's strategy is bounded by $\epsilon$.
\end{definition}
In a practical scenario, to be convinced of the correctness of the output obtained from a given quantum device, one needs a verification protocol where the security parameter ($\epsilon$) can be made arbitrarily small while keeping the cost (in terms of the experimental requirements) optimal. The standard technique for amplification when dealing with classical output is to simply repeat the protocol multiple times (let say $d$) and if all rounds are accepted and result in the same outputs, then this output is the correct except with probability $\epsilon^d$. However, dealing with quantum output requires more elaborate methods (to deal with the possibility of coherent attacks) 
that involves the use of full fault-tolerant computation and the presence of  multiple traps in order to achieve exponential bounds.

\subsection{Our Contribution}

In this work we focus to further improve the underlying resource construction required for VBQC schemes. Our main results can be summarised as follows:

\begin{enumerate}

\item In Section \ref{Sec:DTG}, inspired by the dotted-complete graph state introduced in \cite{fk}, we give a generic construction where for any given (universal or non-universal graph state resource) multiple trap qubits isolated from the computation qubits 
can be added. Unlike the dotted-complete graph state the overhead of the new construction is only linear in the size of the specific computation that will be performed. Furthermore the traps are uniformly distributed and their positions are essentially independent from each other. 

\item We use this construction to obtain a new universal VBQC protocol (Section \ref{Sec:Verification1}) that has lower cost. Since we are using a different resource, the proof technique had to be accordingly adapted. Our protocol before embedding any boosting mechanism will already have a constant security parameter and thus allows a potentially straightforward one-shot experiment.

\item When the output of the quantum computation is classical, we use a repetition technique to boost the security of our protocol to arbitrarily small $\epsilon$ (Section \ref{Sec:Repetition}). Importantly, we can achieve this using a constant number of repetitions that is independent of the size of the computation. In previous VBQC protocols the number of repetitions that were required increased with the size of the computation. 

\item For the general quantum output case, we use a fault-tolerant encoding of the computation and this boosts the security to arbitrary small $\epsilon$ while in the same time we still require only linear, in the size of the computation, overhead (Section \ref{Sec:FT_Verification}). The overhead of most of previous VBQC protocols are quadratic in the size of the computation with the only exception of \cite{kdk2015}. 
\end{enumerate}

Our generic construction could be explored to optimise various other existing VBQC and we briefly comment on that in Section \ref{Sec:Examples}.

\subsection{Related works}

There has been a number of papers on verification addressing different aspects. With no aim to give a complete list we give here a brief description of some related works. Aharonov, Ben-Or and Eban \cite{abe} provided the first verification protocol. It requires a linear overhead in the size of the computation, but also requires a verifier that has involved quantum abilities, and in particular that can prepare entangled states of size that depends on the security parameter.  

Following another approach, based on measurement-based quantum computation, Fitzsimons and Kashefi \cite{fk} obtained the most optimal scheme from the point of view of verifier’s capability.  
However, the overhead of the full scheme becomes quadratic. Recently a solution for addressing this issue was proposed in \cite{kdk2015} by combining the above two protocols \cite{fk} and \cite{abe} in order to construct a hybrid scheme. This was the only verification protocol (before our work) that requires linear number of qubits while in the same time requires that the verifier has the minimal quantum property of preparing single quantum systems. However, the protocol requires the preparation of qudits (rather than qubits) where the dimension is dictated by the desired level of security. Moreover the required resource is still constructed based upon the dotted-complete graph state though of small constant size. Hence further investigation is required for establishing the experimental simplicity of the two schemes, ours and the one in \cite{kdk2015}.

The first experimental implementation of a simplified verification protocol was presented in \cite{efk} where a repetition technique was explored as well. Other experiments on verifiable protocols include \cite{exp_secret} and an experiment based on the protocol in \cite{GHZ_verification}. However, none of these works are applicable to a full universal scheme like ours. 

On the other hand to achieve a classical verifier new techniques are proposed at the cost of increasing the overall overhead of the protocol dramatically \cite{ruv2} or increasing the number of the provers \cite{mckague}. Other device-independent protocols \cite{gkw2015,hpf2015} used a single universal quantum prover and an untrusted qubit measuring device and while the complexity improved it is still far from experimentally realisable.

The VBQC protocol could be generally viewed as prepare and send scheme (using the terminology from QKD). Equivalent schemes based on measurement-only could also be obtained \cite{tomo2014,hm2015}. In this scenario the prover prepares a universal resource and sends it qubit-by-qubit to the verifier that performs different measurements in order to complete a quantum computation. These protocols are referred to as online protocols (in contrast to the offline protocols mentioned above) since the quantum operations of the verifier occur when they know what they wants to compute. The online scheme can also achieve verification either by creating traps \cite{tomo2014}, or by measuring the stabiliser of the resource state \cite{hm2015}. These protocols could be improved using our techniques as we will comment in Section \ref{Sec:Examples}.

Finally a composable definition of \cite{fk} is given in \cite{DFPR13}, while a limited computational model (one-pure-qubit) is examined in \cite{KKD14}. Due to the generic nature of our construction these results would be applicable to our protocol as well.  

The verification protocols in \cite{Broadbent2015,BGS13} are teleportation based. However, due to the existence of a general mapping between the teleportation  (with two qubits measurement) and one-way computing (with one qubit measurement), see details in \cite{Aliferis04,mbqc}, one should be able to explore any possible improvement that our techniques could bring to these protocols, i.e. extending them to a fault-tolerant setting with quantum outputs.

\subsection{Background}

The family of VBQC protocols are conveniently presented in the measurement-based quantum computation (MBQC) model
\cite{onewaycomputer} that is known to be the same as any gate teleportation model \cite{childs2005unified}. We will assume that the
reader is familiar with this model (also known as the one-way model), whereas further details can be found in \cite{mbqc}. The general idea
behind an MBQC protocol is that one starts with a large and highly entangled multiparty state (the resource state) and the computation is
performed by carrying out single-qubit measurements. There is a (partial) order on the measurements since the basis of a measurement
could depend on the outcomes of previous measurements. The resource states used are known as \emph{graph states} as they could be fully determined by a given graph see details in \cite{hein2004multiparty}. One physical way to construct a graph state given the graph
description is to assign to each vertex of the graph a qubit initially prepared in the state $\ket{+}$ and for each edge of the graph to perform a $\textrm{controlled-}Z$ gate to the two adjacent vertices.

If one starts with a graph state that the qubits were prepared in a rotated basis $\ket{+_\theta}=1/\sqrt{2}(\ket{0}+e^{i\theta}\ket{1})$
instead, then it is possible to perform the same computation with the non-rotated graph state by preforming measurements in a similarly
rotated basis. This observation led to the formulation of the \emph{universal blind quantum computation} (UBQC) protocol \cite{bfk}.
Here a client prepares rotated qubits, where the rotation is only known to them. Client sends the qubits to the server (as soon as they
are prepared hence there is no need for any quantum memory). Finally the client instructs the server to perform entangling operations according to the graph and perform single qubits measurements in suitable angles and order to perform the desired computation. During this protocol the client receives  the intermediate outcomes of previous measurements allowing them to classically evaluate the next measurement angle. Due to the unknown rotation the server does not learn what computation they actually perform.

The UBQC protocol could be turned into a verification protocol where the client (referred to now as verifier) could detect a cheating
server (referred to now as prover). To do so, the verifier for certain vertices (called dummies) sends states from the set
$\{\ket{0},\ket{1}\}$ which has the same effect as a $Z$-basis measurement on that vertex. In any graph state if a vertex is measured
in the $Z$-basis it results in a new graph where that vertex and all its adjacent edges are removed. During the protocol the prover does not
know for a particular vertex if the verifier send a dummy qubit or not. This enables the verifier to isolate some qubits (disentangled
from the rest of the graph) and those qubits have fixed deterministic outcomes if the prover followed honestly the
instructions. The positions of those isolated qubits are unknown to the prover and the verifier uses them as traps to test that the prover
performs the quantum operations that is given. This technique resulted in the first universal VBQC protocol \cite{fk} which is the basis of
our paper. As we explain later, while the trapification idea is straightforward however it is challenging to find the most optimal way
of inserting trap qubits while not breaking the general computation. This is the central focus of this paper to introduce a general
optimised scheme for constructing graph state resources for VBQC protocols.

\section{The dotted triple-graph construction}\label{Sec:DTG}

Our construction starts with a ``base'' graph $G$ such that the related graph state $\ket{G}$ can be used as the resource to perform a particular (or universal) quantum computation in MBQC. This graph is then ``decorated'' in a suitable way, resulting to a graph that we will call dotted triple-graph $DT(G)$. This new graph leads to a suitable resource state $\ket{DT(G)}$ for running a verified quantum computation in an efficient way. The general idea is to construct the $DT(G)$ graph which after some operations (chosen secretly by the verifier) can be broken to three identical graphs. The one will be used to perform the desired computation and the other two to insert trap computations to detect possible deviations. The way that the $DT(G)$ is broken is chosen by the verifier and thus the prover is blind about which vertex belongs to which graph. This general idea was first introduced in \cite{fk}. The key difference of our construction is that while in \cite{fk} the breaking to subgraphs occurs in a global way, in our construction it happens locally. This difference results in a reduction on the number of vertices (and thus qubits) that are required. The precise meaning of ``locality'' will become apparent later after we introduce our construction. 

In this section we will only give definitions and properties of the dotted triple-graph construction when viewed purely as graph operations. Those properties will play a crucial role in the next sections where we will use as resource state the dotted triple-graph state $\ket{DT(G)}$ in order to obtain verifiable quantum computation protocols. Before going to the construction we give a definition:

\begin{definition}\label{dotting} We define the \emph{dotting} operator $D$ on graph $G$ to be the operator which transforms a graph $G$ to a new graph denoted as $D(G)$ and called \emph{dotted} graph, by replacing every edge in $G$ with a new vertex connected to the two vertices originally joined by that edge. We call the set of vertices of $D(G)$ previously inherited from $G$ as primary vertices $P(D(G))$, and the vertices added by the $D$ operation as added vertices denoted by $A(D(G))$.  
\end{definition}

\noindent\textbf{Dotted triple-graph construction:} 
\begin{enumerate}

\item We are given a base graph $G$ that has vertices $v\in V(G)$ and edges $e\in E(G)$, as in Figure \ref{figure1} (a). In the following steps we will give the new graph $DT(G)$, called dotted-triple graph and specify its vertices and edges.

\item For each vertex $v_i$, we define a set of three new vertices $P_{v_i}=\{p^{v_i}_1,p^{v_i}_3,p^{v_i}_3\}$. %We call these vertices \emph{primary} vertices of the dotted triple-graph $DT(G)$.

\item Corresponding to each edge $e(v_i,v_j)\in E(G)$ of the base graph that connects the base vertices $v_i$ and $v_j$, we introduce a set of nine edges $E_{e(v_i,v_j)}$ that connect each of the vertices in the set $P_{v_i}$ with each of the vertices in the set $P_{v_j}$.

\item The graph that its vertices are $\cup_{v_i\in V(G)}P_{v_i}$ and the edges are defined as in the previous step is called triple-graph $T(G)$, as in Figure \ref{figure1} (b).

\item We perform the dotting operator $D$ on the triple graph $T(G)$ to obtain the dotted triple-graph $DT(G)$. An example of dotted triple-graph can be seen in Figure \ref{figure1} (c).

\end{enumerate}

\begin{figure}
\includegraphics[width=1.0\columnwidth]{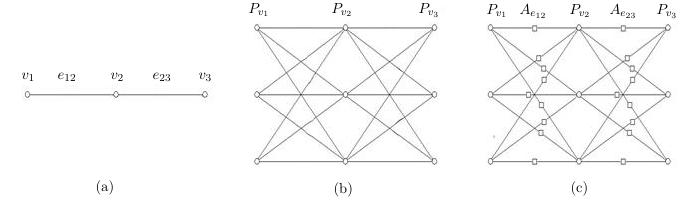}
\caption{(a) A base graph consisting of three vertices and two edges. (b) A triple-graph $T(G)$ where for each vertex $v$ there is a set of three vertices $P_v$. (c) A dotted triple-graph. For each edge of the base graph there is a set of nine added vertices $A_e$. The added vertices are denoted as squares, while the primary as circles.}
\label{figure1}
\end{figure}

Note that, according to Definition \ref{dotting} and the labeling in the above construction the primary vertices are given as  $P(DT(G)) = \cup_{v_i} P_{v_i}$. For convenience we also label the added vertices $A(DT(G))$ as follows. Corresponding to each edge $e(v_i,v_j)$ of the base graph $G$, there are now 9 added vertices and we will denote each set of added vertices as $A_{e_{ij}}=\{a^{e_{ij}}_1,\cdots,a^{e_{ij}}_9\}$. Note that the number of vertices of the new graph is $|V(DT(G))|=3|V(G)|+9|E(G)|$.  
If the maximum degree of the base graph is a constant $c$ then the number of vertices of the $DT(G)$ are linear in the number of vertices of the base graph.  
This property means that if we can base our verifiable quantum computation protocol on this graph, then the number of qubits we will need is linear in the size of the computation. 

Having defined the dotted triple-graph construction, we give some definitions before we prove certain properties of the $DT(G)$. Note that in what follows we present a coloring scheme that is not in effect a graph coloring of the $DT(G)$. The next definition is essentially a labeling scheme that for convenience we present it as a coloring. Therefore connected vertices could get the same color.

\begin{definition}[Trap-Colouring]\label{trap colouring} We define trap-colouring to be an assignment of one colour to each of the vertices of the dotted triple-graph that is consistent with the following conditions. 
\begin{enumerate}
\item[(i)] Primary vertices are coloured in one of the three colours white, black or green. 
\item[(ii)] Added vertices are coloured in one of the four colours white, black, green or red. 
\item[(iii)] In each primary set $P_v$ there is exactly one vertex of each colour. 
\item[(iv)] Colouring the primary vertices fixes the colours of the added vertices: Added vertices that connect primary vertices of different colour are red. Added vertices that connect primary vertices of the same colour get that colour.
\end{enumerate}
\end{definition}
Note that the choice of colours for each of the primary sets $P_v$ can be chosen randomly and is independent from the choices made on other primary sets.  We can also see that in each of the added sets $A_e$ we have one white, one black, one green and six red vertices.

\begin{figure}\label{figure2}
\includegraphics[width=1.0\columnwidth]{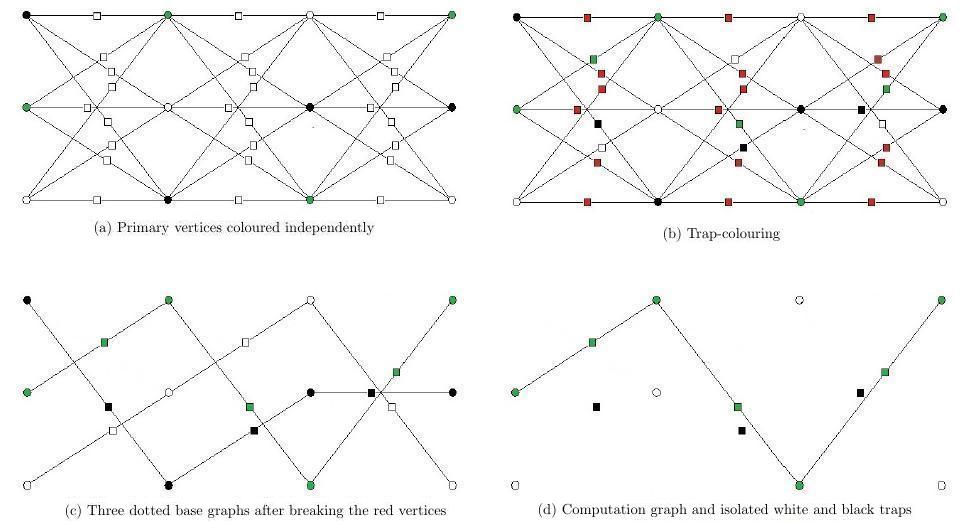}
\caption{(a) A dotted triple-graph, where only the primary vertices are coloured, and this is done randomly for each set. (b) A trap-colouring of $DT(G)$ that is fully fixed from the colouring of the primary vertices. (c) $DT(G)$ after performing break operations on all red vertices. This results to three copies of the dotted base graph. (d) $DT(G)$ after performing further break operations on the primary vertices of the black graph and added vertices of the white graph. The result is a dotted base graph (green) and isolated white traps on primary vertices and black traps on added vertices.}
\end{figure}

In Figure \ref{figure2} (a) and (b) we see an example of trap-colouring, where in (a) we choose independently the colour choices of primary vertices and in (b) the colours of added vertices is fixed following the rules for trap-colouring given above. Before proving the next property we present a graph operation (also introduced in \cite{fk}).

\begin{definition} 
We define the \emph{break} operator on a vertex $v$ of a graph $G$ to be the operator which removes vertex $v$ and any adjacent edges to $v$ from $G$. 
\end{definition}

\begin{lemma}\label{break_red}
Given the dotted triple-graph $DT(G)$ and a trap-colouring, by performing break operations on the red vertices we obtain three identical copies of the dotted base graph $D(G)$ each of them consisting of a single colour. 
\end{lemma}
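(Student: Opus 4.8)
The plan is to track what happens to each vertex of $DT(G)$ under the break operations on red vertices, and to show the surviving graph on the remaining (non-red) vertices decomposes as a disjoint union of three copies of $D(G)$, one per colour. First I would observe that after breaking all red vertices, the vertices that remain are exactly the primary vertices (all of which are white, black, or green by condition (i)) together with the non-red added vertices. By condition (iv), a surviving added vertex has both its primary neighbours of the same colour, and it inherits that colour; so every surviving vertex carries one of the three colours white, black, green.

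\medskip
\noindent\textbf{Edges across colours are gone.} The next step is to check that no surviving edge joins two vertices of different colours. Edges in $DT(G)$ only ever connect a primary vertex to an added vertex (this is the structure produced by the dotting operator $D$ in Definition \ref{dotting}, applied to $T(G)$). Take a surviving added vertex $a$; it has colour equal to the common colour of its two primary neighbours in $T(G)$. Both of those edges survive, and both endpoints share $a$'s colour; and $a$ has no other neighbours. Hence every surviving edge is monochromatic, and the surviving graph splits as a disjoint union $H_{\mathrm{white}} \sqcup H_{\mathrm{black}} \sqcup H_{\mathrm{green}}$.

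\medskip
\noindent\textbf{Each colour class is a copy of $D(G)$.} Now fix a colour, say white. Its primary vertices are $\{p^{v_i}_{w(i)}\}$, one per base vertex $v_i$, where $p^{v_i}_{w(i)}$ is the unique white vertex of $P_{v_i}$ guaranteed by condition (iii). Its added vertices are the white ones among the $A_{e_{ij}}$; by the remark following Definition \ref{trap colouring}, each $A_{e_{ij}}$ contains exactly one white vertex, and it survives precisely when $p^{v_i}_{w(i)}$ and $p^{v_j}_{w(j)}$ are both white — which always holds by construction. So for each base edge $e(v_i,v_j)$ there is exactly one surviving white added vertex, connected to exactly the white primaries of $P_{v_i}$ and $P_{v_j}$. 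Mapping $v_i \mapsto p^{v_i}_{w(i)}$ and $e(v_i,v_j) \mapsto$ (the white vertex of $A_{e_{ij}}$) is then a graph isomorphism from $D(G)$ onto $H_{\mathrm{white}}$: $D(G)$ by Definition \ref{dotting} has exactly one added vertex per edge of $G$ joined to the two corresponding primaries, which is exactly what we have recovered. The same argument applies verbatim to black and green, giving three identical copies of $D(G)$.

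\medskip
I do not expect a serious obstacle here; the statement is essentially bookkeeping built into the construction. The one point that needs care is condition (iv) in the degenerate direction — confirming that a red added vertex never sits between two same-coloured primaries and that a surviving same-coloured added vertex is genuinely forced to that colour (not accidentally red) — but this is immediate from the ``if and only if'' structure of the colouring rule together with the counting remark that each $A_e$ has exactly one vertex of each non-red colour. The rest is verifying that the colour-preserving vertex map respects adjacency in both directions, which follows because $DT(G)$ has no primary–primary or added–added edges.
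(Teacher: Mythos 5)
Your proposal is correct and follows essentially the same route as the paper's own proof: first showing that all surviving edges are monochromatic (because cross-colour added vertices are red and hence broken, and $DT(G)$ has only primary--added edges), and then exhibiting each colour class as a copy of $D(G)$ via the unique same-coloured vertex in each $P_{v}$ and $A_{e}$. Your version simply spells out the isomorphism and the adjacency check in more detail than the paper does.
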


\begin{proof}
First we note that after the break operations on red added vertices, all the vertices of different colours are disconnected. This follows since edges connecting different colour primary vertices were coloured by definition red, while all added vertices that were not red are connected with same colour vertices. Then, we need to show that the graph of each colour results to a graph identical to $D(G)$. To see this note that for each vertex $v_i$ of the base graph, there is a white (black, green) vertex in $P_{v_i}$. Then for each edge $e(v_i,v_j)$ of the base graph $G$, there is a unique white (black, green) added vertex in $A_{e_{ij}}$ that joins the white vertex $p_w^{v_i}\in P_{v_i}$ and the white vertex $p_w^{v_j}\in P_{v_j}$ (and similarly for black and green). 
%Furthermore, after the breaking operations on the red vertices, no other vertex connecting different colours exists, and therefore we have a white $D(G)_w$, a black $D(G)_b$ and a green $D(G)_g$ graph.
\end{proof}
Figure \ref{figure2} (c) illustrates how the $DT(G)$ breaks to three identical dotted base graph, after performing break operations on the red vertices. 

\begin{definition} 
We define the \emph{base-location} of a vertex $f\in V(DT(G))$ of the dotted triple-graph to be the position that the set $P_v$ or $A_e$ that includes $f$ has in the dotted base graph $D(G)$. This position is denoted by either ``$v$'' corresponding to the primary vertex of $D(G)$ or with ``$e$'' corresponding to the added vertex of $D(G)$ on the edge $e$.
\end{definition}
Given a trap-colouring, each primary vertex belongs to one of the three graphs where the colour is determined by the trap-colouring. However, its base-location is fixed prior to the trap-colouring. Here we can see the difference of our construction with that of \cite{fk}. There a dotted-complete graph was used and the graph also broke to three identical graphs, where all primary vertices belonged to one of these graphs. However, there was no restriction of how this break happens, and any choice of three equal subsets was valid. In our construction we maintain the structure of the base-location (reducing the number of added vertices required), but in the same time the colour choices at one primary base-location are totally independent from colour choices at other primary base-location.

The next property that was essentially proved in \cite{fk}, is also required for the verification protocols presented in the next sections. 

\begin{lemma}\label{break_traps}
Given a dotted graph $D(G)$, by applying break operators to every vertex in $P(D(G))$ or $A(D(G))$ the resulting graph is composed of the vertices of $A(D(G))$ or $P(D(G))$ respectively and contains no edges.
\end{lemma}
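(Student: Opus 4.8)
The plan is to observe that the dotted graph $D(G)$ is bipartite, with the two sides being precisely $P(D(G))$ and $A(D(G))$, and that every edge of $D(G)$ runs between the two sides; once this is in hand, breaking all of one side necessarily destroys every edge.

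First I would recall, from Definition~\ref{dotting}, the exact edge set of $D(G)$: each original edge $e(v_i,v_j)$ of $G$ is deleted and replaced by a new added vertex $a_e \in A(D(G))$ joined to $v_i$ and to $v_j$. Thus the edges of $D(G)$ are exactly the pairs $\{a_e, v_i\}$ and $\{a_e, v_j\}$ as $e(v_i,v_j)$ ranges over $E(G)$. In particular no edge of $D(G)$ joins two primary vertices — the original edges of $G$ have been removed by the dotting operation — and no edge joins two added vertices, since each $a_e$ is created adjacent only to the primary endpoints of its edge. Hence every edge of $D(G)$ has exactly one endpoint in $P(D(G))$ and exactly one endpoint in $A(D(G))$.

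Next I would apply the break operator to every vertex of $P(D(G))$. Since the break operator removes a vertex together with all edges incident to it, and since every edge of $D(G)$ is incident to some primary vertex, after performing all these break operations no edge survives; the surviving vertices are exactly those never removed, namely $A(D(G))$. This yields an edgeless graph on vertex set $A(D(G))$. The symmetric statement — that breaking every vertex of $A(D(G))$ leaves the edgeless graph on vertex set $P(D(G))$ — follows by the identical argument, using that every edge of $D(G)$ likewise has an endpoint in $A(D(G))$.

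I do not expect a genuine obstacle here; the only point requiring care is the bookkeeping of the edge set of $D(G)$, in particular confirming that dotting \emph{removes} the original edges of $G$ rather than retaining them, which is explicit in Definition~\ref{dotting}. Once that is pinned down, the bipartiteness of $D(G)$ with respect to $\bigl(P(D(G)),A(D(G))\bigr)$ — and hence the lemma — is immediate.
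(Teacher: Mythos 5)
Your proof is correct and matches the paper's argument: both rest on the observation that the dotting operation removes all original edges of $G$ and introduces each added vertex adjacent only to primary vertices, so that $D(G)$ is bipartite between $P(D(G))$ and $A(D(G))$, whence breaking one side removes every edge. No issues.
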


\begin{proof}
As the dotting operation only introduces vertices connected to vertices in $P(D(G))$, every vertex in $A(D(G))$ shares edges only with vertices in $P(D(G))$. Thus when the vertices in $P(D(G))$ and their associated edges are removed by the break operators, the vertices in $A(D(G))$ become disconnected. Similarly, since the dotting operation removes all edges between vertices in $P(D(G))$, hence every vertex in $P(D(G))$ shares edges only with vertices in $A(D(G))$. Thus when the vertices in $A(D(G))$ and their associated edges are removed by the break operators, the vertices in $P(D(G))$ become disconnected.
\end{proof}

In Figure \ref{figure2} (d) we see that after the break operations of Firgure \ref{figure2} (c), further break operations are performed on all white added vertices and on all black primary vertices. We end up with a (green) copy of the dotted base graph and white isolated traps at primary vertices and black isolated traps at added vertices. 

There are common properties that we will prove for both primary and added vertices and for the ease of notation we will refer to either such set $P_v$ or $A_e$ as $F_l$ with the convention that the subscript $l$ denotes the base-location of the set and when it takes value $v$ (primary base-location) it becomes $P_v$ and when it takes value $e$ (added base-location) it becomes $A_e$. 

Next we show that while the trap-colouring is a global construction it can indeed be considered as a local scheme. This property will be explored in our proof technique for the verification. We formalise this notion in the next set of definitions and lemmas. 

\begin{definition}
We define \emph{local-colouring} of a set $F_l$ to be an assignment of colours to that set that is consistent with some global trap-colouring.
\end{definition}
This definition captures the idea of colouring a particular set $F_l$ corresponding to base-location $l$ such that it can be part of some global trap-colouring without having \emph{any} further constraints from colours of vertices at other base-locations. We can see that a local-colouring of an added set $A_{e_{ij}}$ fully determines the colours of the vertices in the two neighbouring primary base-locations $P_{v_i},P_{v_j}$, while the converse is also true. A local-colouring of the two primary sets $P_{v_i},P_{v_j}$ fully determines the colours of the added set $A_{e_{ij}}$. We can therefore see that a local-colouring of set $A_{e_{ij}}$ is equivalent with a local-colouring of the two neighboring primary base-locations $P_{v_i},P_{v_j}$. We can also see that a local-colouring of \emph{all} primary sets $P_v$ is compatible with a trap-colouring and fixes it uniquely.

However, if we have two general sets $F_{l_1},F_{l_2}$ it is not always possible to colour them both using a local-colouring and still be able to find a global trap-colouring. An example is if we have a primary set $P_{v_1}$ and its added neighbouring set $A_{e_{1k}}$, where a local colouring of the set $P_{v_1}$ imposes constraints on the colours of $A_{e_{1k}}$ further than those required from a local-colouring. To see this note that an added vertex connected to a white primary vertex can be either white or red, but can never be black. 
One can see that an added set $A_{e_{ij}}$ can have local-colouring if there is no constraint on the colours from the neighbouring primary sets $P_{v_i},P_{v_j}$, but also from other added sets $A_{e_{ik}},A_{e_{jk}}$ that have common neighbor sets (either $P_{v_i}$ or $P_{v_j})$. Here we wish to make precise when there is a collection of base-locations that one can assign (independently) local-colourings to all the related sets $F_l$ and still be able to always find a global trap-colouring.

\begin{definition}[Independently Colourable Locations (ICL)]\label{ICL} 
Given a dotted triple-graph $DT(G)$ and a collection of $n$ base-locations $\mathcal{E}$ with corresponding sets $F_l$, we call the set $\mathcal{E}$ \emph{independently colourable locations} if any local-colouring  of the sets $F_l$ is consistent with at least one trap-colouring.
\end{definition}
What this definition captures is that the choice of colours within each of the sets $F_l$ corresponding to a base-location in $\mathcal{E}$ is independent from the choice of colours in other sets $F_{l'}$ with base-location in $\mathcal{E}$. 

For each base-location $l$ we define $\epsilon_{l}=\{l\}$ if the base-location is primary and $\epsilon_{l}=N_{D(G)}(l)$ if the base-location is added (i.e. in the latter case, it contains the two primary base-locations that are adjacent to the location $l$).
\begin{lemma} \label{ICL3}
A set of $n$ base-locations $\mathcal{E}$ is ICL if and only if for all pairs $i,j\in \mathcal{E}$ the sets $\epsilon_{i}\cap\epsilon_{j}=\emptyset$. 
\end{lemma}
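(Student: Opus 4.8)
The plan is to prove both directions of the equivalence. I will use the structural fact (established in the discussion preceding the lemma) that a local-colouring of a primary set $P_v$ imposes no external constraints beyond itself, that a local-colouring of an added set $A_e$ is equivalent to a local-colouring of its two neighbouring primary sets, and that a local-colouring of \emph{all} primary sets is automatically consistent with a (unique) trap-colouring. The key reformulation is that for each base-location $l$, the set $\epsilon_l$ is precisely the set of primary base-locations whose colours are \emph{determined} by a local-colouring of $F_l$; indeed if $l$ is primary then $F_l = P_l$ determines only $P_l$ (so $\epsilon_l=\{l\}$), and if $l$ is added then a local-colouring of $A_l$ determines exactly the colours of its two primary neighbours $N_{D(G)}(l)$ (so $\epsilon_l = N_{D(G)}(l)$).

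For the ``if'' direction, suppose $\epsilon_i\cap\epsilon_j=\emptyset$ for all $i,j\in\mathcal{E}$. Given an arbitrary assignment of local-colourings to the sets $F_l$, $l\in\mathcal{E}$, each such local-colouring determines the colours of the primary sets indexed by $\epsilon_l$, and by disjointness these determinations never overlap, hence never conflict. I then extend to a full primary colouring by choosing arbitrary local-colourings for all remaining primary base-locations (those not lying in any $\epsilon_l$). Since any choice of local-colourings on \emph{all} primary sets is consistent with a trap-colouring, this yields a global trap-colouring, which by construction restricts to the prescribed local-colouring on each $F_l$ with $l\in\mathcal{E}$. Hence $\mathcal{E}$ is ICL.

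For the ``only if'' direction, I argue by contraposition: suppose there exist $i,j\in\mathcal{E}$ with a common primary base-location $v\in\epsilon_i\cap\epsilon_j$. I construct local-colourings of $F_i$ and $F_j$ that force incompatible colours at $P_v$. Concretely, pick a local-colouring of $F_i$ that forces $P_v$'s white vertex to be at one position, and a local-colouring of $F_j$ that forces $P_v$'s white vertex to a different position; this is always possible because, for an added set, specifying which of its three ``monochromatic'' added vertices is white pins down the white vertex in each neighbouring primary set, and for a primary set one directly chooses its colouring. These two local choices cannot both be extended to any common trap-colouring (condition (iii) of Definition~\ref{trap colouring} would be violated at $P_v$), so $\mathcal{E}$ is not ICL.

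The main obstacle I anticipate is the bookkeeping in the ``only if'' direction: one must handle the three cases for the types of $i$ and $j$ (primary--primary is impossible since then $\epsilon_i\cap\epsilon_j\ne\emptyset$ forces $i=j$; the genuine cases are primary--added and added--added) and, in each, exhibit an explicit conflicting pair of local-colourings. The underlying reason it works is the rigidity noted in the text — ``an added vertex connected to a white primary vertex can be either white or red, but can never be black'' — which I will invoke to show that two incompatible demands on the same $P_v$ genuinely block every global trap-colouring. The ``if'' direction is comparatively routine once the identification of $\epsilon_l$ with the ``footprint'' of $F_l$ on the primary sets is made precise.
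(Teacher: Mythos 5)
Your proposal is correct and follows essentially the same route as the paper: the ``if'' direction reduces everything to independent local-colourings of primary sets by using the disjointness of the footprints $\epsilon_l$ (replacing each added location's colouring by that of its two primary neighbours), and the ``only if'' direction exhibits a conflict at a shared primary base-location in the primary--added and added--added cases. Your converse is slightly more explicit than the paper's (you construct the conflicting pair of local-colourings rather than just noting that constraints are imposed), but the underlying argument is the same.
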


\begin{proof}
First we prove that a collection of base-locations satisfying this condition, is ICL. From $\epsilon_i\cap\epsilon_j=\emptyset$ we can see that (i) for all primary base-locations in $\mathcal{E}$ no neighbouring base-location is in $\mathcal{E}$ and (ii) for each added base-location, the two neighbouring primary-locations $P_{v_i},P_{v_j}$ are not in $\mathcal{E}$ and neither is any other added base-location set that has neighbours either of $P_{v_i},P_{v_j}$. In other words, the sets of neighbours of added base-locations are disjoint. However, we already noted that a local-colouring of an added base-location is equivalent with a local-colouring of the two neighboring primary base-locations $P_{v_i},P_{v_j}$. By replacing the local-colouring of added base-locations with that of the neighbouring primary base-locations, we reduce the local-colouring of the set $\mathcal{E}$ to that of a collection of local-colourings of primary base-locations. This is ICL since by the definition of trap-colouring no constraint is imposed between the colours of primary sets.

To prove the converse consider two locations $i,j$ such that $\epsilon_i\cap\epsilon_j\neq\emptyset$. Either one is primary and the other is a neighbouring added base-location or $i$ and $j$ are added base-locations sharing a common neighbour $k$. In the first case it is clear that the choice of colour at the primary set (say $i$) imposes constraints on the colours of the added base-location $j$. In the second case, the choise of colour at the added location $i$ can determine that of the neighbour location $k$ (for example a white added vertex that is connected with a primary vertex fixes the colour of that vertex to white). But then fixing the colours of the primary base-location $k$ in its turn imposes constraints for the other added neighbour $j$, and thus a local-colouring of $i$ and $j$ may not be consistent with a trap-colouring. 
\end{proof}

We are now in position to prove the following property that is necessary for Section \ref{Sec:FT_Verification}.

\begin{theorem}\label{ICL2}
Consider a dotted triple-graph $DT(G)$. Consider a set $S$ of $n$ base-locations and assume that the base graph $G$ has maximum degree $c$. Then there exists a subset $S'\subseteq S$ of these base-locations that are ICL (independent colourable locations) and it contains at least  $|S'|=\frac{n}{2c+1}$ locations. 
\end{theorem}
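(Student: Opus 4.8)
The plan is to recast the ICL condition supplied by Lemma \ref{ICL3} as an independent‑set problem in an auxiliary ``conflict'' graph, and then apply a standard greedy bound. Define a graph $H$ with vertex set $S$, where two distinct base‑locations $i,j\in S$ are joined by an edge precisely when $\epsilon_i\cap\epsilon_j\neq\emptyset$. By Lemma \ref{ICL3}, a subset $S'\subseteq S$ is ICL if and only if it is an independent set of $H$. Since any graph on $N$ vertices with maximum degree $\Delta$ contains an independent set of size at least $N/(\Delta+1)$ — greedily pick a vertex, delete it together with its at most $\Delta$ neighbours, and repeat — it suffices to show that $\Delta(H)\le 2c$.

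To bound the degree of $i\in S$ in $H$ I would split into the two types of base‑location. If $i$ is primary, then $\epsilon_i=\{i\}$, so $j$ is adjacent to $i$ exactly when $i\in\epsilon_j$. A primary $j\neq i$ has $\epsilon_j=\{j\}\not\ni i$, so the only $H$‑neighbours of $i$ are added base‑locations $e$ with $i\in N_{D(G)}(e)$, i.e.\ edges of $G$ incident to $i$; there are at most $\deg_G(i)\le c$ of these. If instead $i$ is an added base‑location on an edge $e(v_a,v_b)$ of $G$, then $\epsilon_i=\{v_a,v_b\}$, and $j$ is adjacent to $i$ iff $\epsilon_j$ meets $\{v_a,v_b\}$: the primary neighbours are among $v_a,v_b$ (at most $2$), while an added neighbour is an edge $f\neq e$ of $G$ sharing an endpoint with $e$, and there are at most $(\deg_G(v_a)-1)+(\deg_G(v_b)-1)\le 2(c-1)$ such edges (in a simple graph no edge other than $e$ is incident to both). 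Hence the $H$‑degree of an added location is at most $2(c-1)+2=2c$, and in both cases the degree is at most $2c$.

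Combining, $\Delta(H)\le 2c$, so $H$ admits an independent set — equivalently, an ICL subset $S'\subseteq S$ — with $|S'|\ge n/(2c+1)$, which is the claim. The only point that needs care is the added–added count in the second case: one must use that the two primary base‑locations adjacent to an added location are exactly the endpoints of the corresponding edge of $G$, so that a conflict between two added locations is precisely edge‑adjacency in $G$, controlled by the maximum degree $c$. Everything else is routine bookkeeping.
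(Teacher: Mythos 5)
Your proof is correct and follows essentially the same route as the paper: the paper also bounds the number of base-locations excluded by including a primary location (at most $c$) or an added location (at most $2 + 2(c-1) = 2c$) and then invokes a greedy/pigeonhole argument to extract $n/(2c+1)$ ICL locations. Your conflict-graph formulation with the standard $N/(\Delta+1)$ independent-set bound is just a cleaner packaging of that same counting.
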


\begin{proof}
The set $S$ has $n$ locations of the graph $D(G)$. We want a subset of these locations $S'$ such that it satisfies Lemma \ref{ICL3}. The condition of that lemma requires that if a primary base-location $v_i$ is included, then all its neighbouring base-locations should be excluded. The maximum number of neighbours is given by the maximum number of added base-locations  which is $c$.  Therefore if we include the base-location $v_i$ in the set $S'$, we might need to exclude at most $c$ other base-locations from the set $S$.

To include any added base-location $e_{ij}$ in the set $S'$, Lemma \ref{ICL3} requires that its neighbours $v_i,v_j$ and the neighbours of its neighbours $e_{ik},e_{jk}$ should be excluded. Its neighbours are 2, while the neighbours of the neighbours are at most $2(c-1)$. It follows that to include $e_{ij}$ in the set $S'$ we might need to exclude at most $2c$ other base-locations from the set $S$.

From the pigeonhole principle follows that we can find a set $S'$ with at least $\frac{n}{2c+1}=|S'|$ base-locations that are ICL.
\end{proof}

\section{Verifiable quantum computation}\label{Sec:Verification1}

In this section we give a verifiable blind quantum computation protocol using the dotted triple-graph construction, but otherwise, we follow similar steps with \cite{fk}. 
With our construction we obtain a protocol where the probability of success is constant  (independent of the size of the computation) and we use only linear, in the size of the computation, number of qubits.

\begin{algorithm}[H]
\caption{Verifiable Universal Blind Quantum Computation using dotted triple-graph}
 \label{prot:AUBQC}
We assume that a standard labelling of the vertices of the dotted triple-graph $DT(G)$ used, is known to both the verifier and the prover. The number of qubits is at most $3N(3c+1)$ where $c$ is the maximum degree of the base graph $G$.\\
\noindent$\bullet$ \textbf{Verifier's resources} \\
-- Verifier is given a base graph $G$ that the dotted graph state $\ket{D(G)}$ can be used to perform the desired computation in MBQC with measurement pattern $\mathbb{M}_{\textrm{Comp}}$.\\
-- Verifier generates the dotted triple-graph $DT(G)$, and selects a trap-colouring according to definition \ref{trap colouring} which is done by choosing independently the colours for each set $P_v$.\\
-- Verifier for all red vertices will send dummy qubits and thus performs break operation.\\
-- Verifier chooses the green graph to perform the computation.\\ 
-- Verifier for the white graph sends dummy qubits for all added qubits $a^{e}_w$ and thus generates white isolated qubits one at each primary vertex set $P_{v}$. Similarly for the black graph the verifier sends dummy qubits for the primary qubits $p^v_b$ and thus generates black isolated qubits one at each added vertex set $A_{e}$.\\
\noindent -- The dummy qubits position set $D$ is chosen as defined above (fixed by the trap-colouring).\\
\noindent -- A sequence of measurement angles, $\phi=(\phi_i)_{1\leq i \leq 3N(3c+1)}$ with $\phi_i \in A=\{0,\pi/4,\cdots,7\pi/4\}$, consistent with $\mathbb{M}_{\textrm{Comp}}$, where $\phi_i = 0$ for all the trap and dummy qubits. 
The verifier chooses a measurement order on the dotted base-graph $D(G)$ that is consistent with the flow of the computation (this is known to prover). The measurements within each set $P_v,A_e$ of $DT(G)$ graph are order randomly.
\\
\noindent -- $3N (3c+1)$ random variables $\theta_i$ with value taken uniformly at random from $A$.\\
\noindent -- $3N (3c+1)$ random variables $r_i$ and $|D|$ random variable $d_i$ with values taken uniformly at random from $\{0,1\}$. \\
\noindent -- A fixed function $C(i, \phi_i, \theta_i, r_i, \mathbf{s})$ that for each non-output qubit $i$ computes the angle of the measurement of qubit $i$ to be sent to the prover.

\noindent$\bullet$ \textbf{Initial Step} \\
-- \textbf{Verifier's move:} Verifier sets all the value in $\mathbf{s}$ to be $0$ and prepares the input qubits as
\AR{
\ket e = X^{x_1} Z(\theta_1) \otimes \ldots \otimes  X^{x_l} Z(\theta_l) \ket I
}
and the remaining qubits in the following form
\AR{
\forall i\in D &\;\;\;& \ket {d_i} \\ 
\forall i \not \in D &\;\;\;& \prod_{j\in N_G(i) \cap D} Z^{d_j}\ket {+_{\theta_i}}
}
and sends the prover all the $3N (3c+1)$ qubits in the order of the labelling of the vertices of the graph.

-- \textbf{Prover's move:} Prover receives $3N(3c+1)$ single qubits and entangles them according to $DT(G)$.

\noindent$\bullet$ \textbf{Step $i: \; 1 \leq i \leq 3N (3c+1)$}

-- \textbf{Verifier's move:} Verifier computes the angle $\delta_i=C(i, \phi_i, \theta_i, r_i, \mathbf{s})$ and sends it to the prover.\\ 
-- \textbf{Prover's move:} Prover measures qubit $i$ with angle $\delta_i$ and sends the verifier the result $b_i$. \\
-- \textbf{Verifier's move:} Verifier sets the value of $s_i$ in $\mathbf{s}$ to be $b_i+r_i$. \\

\noindent$\bullet$  \label{step:Alice-prep} \textbf{Verification} \\
-- After obtaining the output qubits from the prover, the verifier measures the output trap qubits with angle $\delta_t=\theta_t+r_t\pi$ to obtain $b_t$.

-- Verifier accepts if $b_i = r_i$ for all the white (primary) and black (added) trap qubits $i$.

\end{algorithm}

As it is evident from the protocol, the positions of the dummy qubits (i.e. those that are $\{\ket{0},\ket{1}\}$) is determined by the trap-colouring. It is known that sending dummy qubits has the same effect as making a $Z$ measurement in MBQC which effectively breaks the graph state at this vertex. Therefore the properties defined in Section \ref{Sec:DTG} corresponding to the reduction of the $DT(G)$ to one dotted base graph $D(G)$ and isolated traps (Lemmas \ref{break_red} and \ref{break_traps}) as well as the properties concerning the independence of the colouring and thus the distribution of traps (Theorem \ref{ICL2}), all apply here. 

\begin{theorem} (correctness)
If both verifier and prover follow the steps of protocol \ref{prot:AUBQC} then the output is correct and the computation accepted.
\end{theorem}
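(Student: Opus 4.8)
The plan is to establish two things in turn: that an honest run turns the resource $\ket{DT(G)}$ into one clean copy of the computation resource $\ket{D(G)}$ together with disentangled trap qubits, and that on this picture the green computation produces the correct output while every trap measurement yields $b_i=r_i$. For the first point I would use the standard fact (already exploited in \cite{fk,bfk}) that preparing a qubit in a state $\ket{d_i}\in\{\ket0,\ket1\}$ and then letting the prover apply the $DT(G)$ entangling operation acts on that qubit's neighbours exactly like a $Z$-basis measurement with outcome $d_i$: the vertex and its edges are removed (a break) and a byproduct $Z^{d_i}$ is left on each neighbour. One then observes that the operators $\prod_j Z^{d_j}$ that the verifier pre-applies to every non-dummy qubit $i$, with $j$ ranging over the dummy neighbours of $i$, cancel these byproducts, so that after the prover's entangling the surviving state is a genuine graph state on the non-dummy vertices with no residual Paulis. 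Since the dummy set $D$ consists precisely of all red vertices together with every white added vertex and every black primary vertex, Lemma \ref{break_red} first breaks $\ket{DT(G)}$ into three disjoint copies of $\ket{D(G)}$ — green, white and black — and then Lemma \ref{break_traps}, applied to the white copy (breaking its added vertices) and to the black copy (breaking its primary vertices), leaves a single copy of $\ket{D(G)}$ on the green vertices, one isolated white qubit at each primary base-location, and one isolated black qubit at each added base-location. Tracing through the explicit preparations, each surviving trap is left in the product state $\ket{+_{\theta_t}}$, and the green copy carries the input encoding $\ket{e}=\bigotimes_l X^{x_l}Z(\theta_l)\ket I$ on its input vertices.

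\emph{Correctness of the computation.} Restricted to the green vertices, Protocol \ref{prot:AUBQC} is literally the UBQC protocol of \cite{bfk} run on $\ket{D(G)}$ with pattern $\mathbb M_{\mathrm{Comp}}$: each qubit is sent rotated by $\theta_i$, the requested angle $\delta_i=C(i,\phi_i,\theta_i,r_i,\mathbf s)$ incorporates the rotation $\theta_i$, the flow-dependent $X$- and $Z$-corrections stored in $\mathbf s$, and the output one-time pad $r_i$, and the update $s_i\leftarrow b_i+r_i$ keeps the correction record consistent. The only genuinely new ingredient — the random order of the measurements \emph{within} each set $P_v$ or $A_e$ — is harmless, because the vertices inside such a set are mutually non-adjacent in $DT(G)$ and, after the dummy breaks, only the green one of them is still entangled with the rest; hence any such interleaving is still a measurement order consistent with the flow of $D(G)$. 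Correctness of UBQC then gives that the honest run implements $\mathbb M_{\mathrm{Comp}}$, so that, once the verifier removes the output one-time pad $r$ and accounts for the input padding $x_l$ through the correction record, the output register holds the correct output.

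\emph{Acceptance.} Each trap $t$ is, by the first part, in the product state $\ket{+_{\theta_t}}$, has $\phi_t=0$, and lies outside the flow, so the angle used for it equals $\theta_t+r_t\pi$ — this is the value of $\delta_t$ requested from the prover in Step $i$ for an internal trap, and it is also the angle $\delta_t=\theta_t+r_t\pi$ at which the verifier measures an output trap in the Verification step. Measuring $\ket{+_{\theta_t}}$ in the basis $\{\ket{+_{\theta_t}},\ket{+_{\theta_t+\pi}}\}$ returns $b_t=r_t$ with certainty, since $\ket{+_{\theta_t}}$ and $\ket{+_{\theta_t+\pi}}$ are orthogonal. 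Thus $b_i=r_i$ for every white primary and every black added trap, and the verifier accepts.

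The part that needs care is the first one — confirming that the $\prod_j Z^{d_j}$ pre-corrections exactly cancel the dummy-induced byproducts, so that the reduced green state is \emph{exactly} $\ket{D(G)}$ and each trap is \emph{exactly} $\ket{+_{\theta_t}}$ — together with the clean identification of the green-restricted protocol with an instance of UBQC (matching the input encoding and the function $C$). Once the traps are known to be isolated, the acceptance computation is immediate.
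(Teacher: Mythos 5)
Your proof is correct and follows essentially the same route as the paper's, which simply notes that the honest run reduces (via the dummy-induced breaks) to the pattern $\mathbb{M}_{\textrm{Comp}}$ on the green copy of $D(G)$ plus isolated traps measured in their own basis, citing the correctness theorems of \cite{fk}. Your version just spells out the intermediate steps (byproduct cancellation, Lemmas \ref{break_red} and \ref{break_traps}, and the harmlessness of the random intra-set measurement order) in more detail than the paper does.
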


\begin{proof}
If both verifier and prover follow the steps of protocol \ref{prot:AUBQC} then the prover essentially (when pre-rotations are taken into account) applies the pattern $\mathbb{M}_{\textrm{Comp}}$ at the green dotted base graph $D(G)$, which by assumption performs the desired computation (see also theorems 1 and 3 of \cite{fk}). Moreover, the isolated white and black qubits are measured in the correct basis and thus the verifier receives $b_i=r_i$ for the traps and accepts the computation.
\end{proof}

\begin{theorem} (Verification)\label{Verification1}
Protocol \ref{prot:AUBQC} is $\left(\frac{8}{9}\right)$-verifiable both when the output is quantum or classical.
\end{theorem}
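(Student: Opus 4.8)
The plan is to follow the structure of the soundness proof in \cite{fk} but adapted to the dotted triple-graph resource. First I would reduce to a canonical form for the prover's attack: by the blindness property of UBQC (the hidden angles $\theta_i$ and the hidden dummy bits $d_i$, $r_i$), any deviation by the prover can be written as a CPTP map applied after the honest entangling operation, and -- using the standard argument that averaging over the one-time-pad angles $\theta_i$ and the flip bits $r_i$ twirls the attack -- the deviation can be brought to a convex combination of Pauli attacks on the qubits, with a classical control register held by the prover. So it suffices to bound, for an arbitrary fixed Pauli deviation $E$ (a tensor product of $X$, $Y$, $Z$ on the qubits of $DT(G)$), the probability that the verifier still accepts when $E$ is nontrivial on the actual computation, averaged over the verifier's secret trap-colouring.

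Second, I would analyse the effect of a Pauli error on the traps. After the dummy qubits are sent (equivalently $Z$-measured), by Lemmas \ref{break_red} and \ref{break_traps} the graph decomposes into the green dotted base graph $D(G)$ together with isolated white traps sitting at primary base-locations and isolated black traps sitting at added base-locations; each isolated trap is a single qubit in a state $\ket{+_{\theta_t}}$ measured at angle $\theta_t + r_t\pi$, so honestly it returns $b_t = r_t$. A Pauli $E$ acting on such a trap qubit flips the outcome (so the verifier rejects) unless $E$ restricted to that qubit is $I$ or $Z$. The key combinatorial point, exactly as in \cite{fk}, is that blindness makes the prover unable to tell, for a given vertex, whether it currently plays the role of a computation qubit, a dummy, a white trap, or a black trap: within each set $F_l = P_v$ or $A_e$ the three non-red primary colours (resp. the non-red added colours) are a uniformly random assignment, independent across base-locations by the remarks following Definition \ref{trap colouring}. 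Hence for a Pauli error supported nontrivially on some set $F_l$, I would compute the probability, over the random colouring of $F_l$, that the error lands on the green (computation) qubit -- affecting the output -- but misses the white and black trap qubits in a way that would otherwise flag it.

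Third, I would do the counting. Fix a base-location $l$ where $E$ acts nontrivially. If $l$ is primary, $P_v = \{p_1,p_2,p_3\}$ carries one white, one black, one green qubit in a uniformly random order. For the attack to corrupt the computation it must act nontrivially (not $I$, and in fact not $Z$, since a $Z$ on the green computation qubit may itself be caught by flow/outcome dependencies, but to be safe one bounds the worst case) on the green qubit; for it to escape detection it must act as $I$ or $Z$ on both the white and the black qubit. Enumerating which of the three physical qubits in $P_v$ carry the worst-case error pattern and which receive only $I/Z$, one finds the fraction of colourings that both corrupt and evade is at most $\tfrac{8}{9}$ -- this is where the constant $8/9$ comes from (there are essentially $9$ equally likely relevant colour configurations and at least one forces detection). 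The same count works for an added set $A_e = \{a_1,\dots,a_9\}$ with its one white, one black, one green and six red qubits (red qubits are dummies, so a Pauli there only matters through its $Z$-type action on neighbours, which is again covered by the same twirl). Taking the base-location where $E$ first (in measurement order) acts nontrivially on the computation, the verifier catches the deviation with probability at least $1/9$, so accepts an incorrect outcome with probability at most $8/9$. For the quantum-output case the same bound holds verbatim because the argument is per-round and makes no use of repetition; the output traps measured in the Verification step play exactly the role of the other traps.

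The main obstacle I expect is the twirling/reduction step: carefully justifying that the prover's most general attack -- which may act jointly on all qubits and correlate across rounds through the classically-returned measurement bits $b_i$ -- reduces, after averaging over $\theta_i$ and $r_i$, to a probabilistic mixture of Pauli attacks with no loss, in a setting where the resource graph is $DT(G)$ rather than the dotted-complete graph. The graph-theoretic bookkeeping (which qubit is a trap, which is a dummy, independence of colourings) is handled cleanly by Lemmas \ref{break_red}, \ref{break_traps} and the independence remarks, so the combinatorial $8/9$ bound is routine once the attack is in Pauli form; the delicate part is the blindness/twirl argument and making sure the locality of the colouring (each $P_v$ coloured independently) is what licenses treating each base-location's colour as a fresh uniform random variable from the prover's viewpoint.
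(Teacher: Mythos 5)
Your overall route is the same as the paper's: reduce the prover's deviation to a convex combination of Pauli attacks by averaging over the hidden $\theta_i,r_i$ (the paper's Steps~1--4 in Appendix~\ref{App:proof1}), then exploit the fact that the trap-colouring is chosen independently per set $P_v$, $A_e$ to bound the evasion probability by a single local count over one such set. Your variant of Step~4 (bounding by the first base-location at which the Pauli acts nontrivially, rather than arguing that a single nontrivial attack maximises the bound) is an acceptable equivalent. The delicate point you flag -- the twirl reduction -- is indeed where the paper spends its effort, and it goes through for $DT(G)$ exactly as for the dotted-complete graph because it never uses the graph structure, only the uniformity of $\theta_t,r_t$ on the traps.

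However, your counting step contains a concrete error: you derive the constant $8/9$ from the \emph{primary} sets $P_v$, claiming ``$9$ equally likely relevant colour configurations.'' A primary set has only $3$ qubits, exactly one of which (the white one) is a trap; the black primary qubit is a dummy, so only the white qubit can catch the attack. A single nontrivial Pauli on a uniformly random member of $P_v$ therefore hits the trap with probability $1/3$, giving an evasion bound of $2/3$, not $8/9$. The $8/9$ actually comes from the \emph{added} sets $A_e$: these have $9$ qubits (one green computation qubit, one black trap, one white dummy, six red dummies), so a single $X$/$Y$ attack there misses the unique trap with probability $8/9$, and this is the prover's optimal choice. The final bound is the maximum over the three cases $g(i,P^O_{v}) \le 5/6$ (output), $g(i,P_{v}) \le 2/3$ (measured primary), $g(i,A_{e}) \le 8/9$ (measured added). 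Relatedly, your claim that the quantum-output case ``holds verbatim'' glosses over the fact that an output trap is kept coherent and only detects a $Z$ attack with probability $1/2$ (whence the $5/6$ rather than $2/3$ for output sets); the conclusion survives only because $5/6 < 8/9$, so this case needs its own short calculation rather than being identical to the measured-trap case.
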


\begin{proof}
The proof follows closely steps of the proof of theorem 8 of ref \cite{fk} and thus here we will only outline the steps of the proof and stress where we differ, while a more detailed proof can be found in Appendix \ref{App:proof1}. 

The proof consists of five steps. In \textbf{step 1} we prove that any deviation from the ideal protocol can be expressed in terms of some Kraus operators which are then written as linear combination of strings of Pauli matrices (denoted as $\sigma_i$) and the remaining of the proof is to see which of those attacks maximise the probability of accepting an incorrect outcome. 

In \textbf{step 2} we note that there are some strings $\sigma_i$ that for \emph{any} choice of the secret parameters (trap positions, angles, etc) of the verifier do not corrupt the computation and thus they do not contribute to the probability of failure. The set of all the other strings $\sigma$ (that could corrupt the computation for some choice of parameters) will be denoted as $E_i$. It is clear that the prover, to optimise the chance to get an incorrect outcome accepted, should only use attacks from the set $E_i$. In this section, where we consider the simplest protocol, a single non-trivial attack could corrupt the computation and the set $E_i$ consists of all the attacks $\sigma$'s that have in at least one position a non-trivial attack. However, in the next section this changes. The technique to amplify the success probability uses fault-tolerant encoding and thus the computation is corrupted only if multiple errors occur and this leads to different set $E_i$. For now we keep the description general for as long as possible, so that it applies for the next section. After the set $E_i$ is defined, in order to compute an upper bound for the failure probability, we simply compute the probability of not triggering any trap given that the attacks used are all from the set $E_i$. This is clearly an upper bound for the failure probability (worse-case scenario), since in reality the fact that there exist some choices of the secret parameters that a given $\sigma\in E_i$ corrupts the computation does not mean that it corrupts the computation in general. However, an upper bound $\epsilon$ of the failure probability suffices to prove that the protocol is $\epsilon$-verifiable\footnote{It is worth pointing out, that since this is a weak bound a different proof technique could result to a tighter bound for the failure probability.}.

In \textbf{step 3} we exploit the blindness of the prover. The fact that the prover does not know the secret parameters restricts the attacks that contribute to the bound of the failure probability we compute to a convex combination of Pauli attacks. This is important since it eliminates ``coherent'' type of attacks and resembles theorems in quantum key distribution (QKD) that reduce coherent attacks to collective by exploiting the symmetry of the states.

In \textbf{step 4} we show that the prover maximises the value of the bound of failure probability if they perform an attack with exactly the fewest non-trivial attacks that are consistent with $E_i$ obtained from step 2. This is a single attack for the protocol of this section (but different in Section \ref{Sec:FT_Verification}). Moreover, since we have a convex combination of positive numbers, the greatest value is obtained for a single $\sigma$. In the next steps of the proof we find the maximum value of our bound for an attack corresponding to the single optimal (for the prover)  $\sigma$. Here our technique has deviated from \cite{fk} so we will give more details later (and in Appendix \ref{App:proof1}).

Finally, in \textbf{step 5} we use the partition of the qubits to sets $P_v$ and $A_e$.  It is important to note, that within each of those sets there is exactly one computation qubit and exactly one trap qubit. From previous steps we know that the bound of the failure probability is highest if the prover chooses to make a single non-trivial attack. This attack happens at a qubit that belongs to either some set $P_v$ or some set $A_e$. The probability of hitting a trap given a single set is clearly independent from the other free parameters corresponding to this qubit (but not the probability to detect it in general) and it goes as $1/|P_v|$ or $1/|A_e|$. This leads us to a bound for the failure probability $\epsilon=8/9$.

We now give some definitions taken from \cite{fk} in order o elaborate further on the above steps at places we significantly differ from \cite{fk}, while the full proof is given in Appendix \ref{App:proof1}.  The \emph{output density operator} of the protocol is $B_j(\nu)$ and is given by

\EQ{
B_j(\nu)=\Tr_B\left(\sum_b \ket{b+c_r}\bra{b}C_{\nu_C,b}\Omega\mathcal{P}((\otimes^B\ket{0}\bra{0}\otimes\ket{\Psi^{\nu,b}}\bra{\Psi^{\nu,b}})\mathcal{P}^\dagger\Omega^\dagger C_{\nu_C, b}^\dagger\ket{b}\bra{b+c_r}\right)
}
where we have the following definitions: The subscript $j$ of the operator $B$, corresponds to the strategy/deviation that the prover makes, and when $j=0$ is the honest run where there is no deviation (and thus the operator $\Omega=\mathbb{I}$). The index $\nu$, collectively denotes all the random choices made by the verifier, i.e. $x,r,\theta,d$ and the positions of the traps $T$ (where the latter depends on the trap-colouring of the dotted triple-graph). The $b$'s are the outcomes of the prover's measurement, $(c_r)_i=r_i$ for $i\notin T$ and $(c_r)_t=0$ for $t\in T$, the subscript $B$ denotes tracing-out the prover's private registers. $C_{\nu_C,b}$ is the Pauli operator acting on the quantum output, that maps the final outcome to the correct one depending on the choices of random variables $\nu_C$ and the computation branch $b$. $\mathcal{P}$ is the unitary corresponding to implementing honestly the protocol. $\Omega$ is the deviation of the prover and is identity in the honest run.  $\ket{\Psi^{\nu,b}}=\ket{M^\nu}\otimes_j \ket{\delta_j^b}$ is the initial state send by the verifier, that includes the quantum input and the $\ket{+_{\theta}}$ states which are jointly denoted as $\ket{M^\nu}$ and depend on the random choices, and the $\ket{\delta}$ registers correspond to the measurement angles (that depend on the branch of the computation $b$).

The probability of failure of the protocol is when the protocol returns ``Accept'' but the output is orthogonal to the honest ideal. This probability is given by

\EQ{
p_{\textrm{fail}}&=& \sum_\nu p(\nu)\Tr (P^\nu_{\textrm{incorrect}} B_j(\nu))
}
where 

\EQ{
P^\nu_{\textrm{incorrect}}&=&(\mathbb{I}-\ket{\Psi_{\textrm{ideal}}}\bra{\Psi_{\textrm{ideal}}})\otimes_{t\in T}\ket{\eta_t^{\nu_T}}\bra{\eta_t^{\nu_T}}\nonumber\\
&=&P_\bot \otimes_{t\in T}\ket{\eta_t^{\nu_T}}\bra{\eta_t^{\nu_T}}
}
is the projection into the wrong subspace (orthogonal space to the correct ideal state) while it still remains within the accept subspace (where the traps succeed).  The prover's attack can be written in terms of Kraus operators such that $\sum_k\chi_k^\dagger\chi_k=\mathbb{I}$ which are written in terms of strings of Pauli as $\chi_k=\sum_i a_{ik}\sigma_i$. We then follow \cite{fk} up to step 3 and obtain the expression (see Appendix \ref{App:proof1}):

\EQ{
p_{\mathrm{fail}}&\leq & \sum_k\sum_{i\in E_i}|\alpha_{ki}|^2\sum_T p(T) \prod_{t\in T}\left(\sum_{\theta_t,r_t}p(\theta_t)p(r_t)(\bra{\eta_t^{\nu_T}}\sigma_{i|t}\ket{\eta_t^{\nu_T}})^2\right)
}
Here we denote $\sigma_{i|\gamma}$ the single-qubit Pauli matrix corresponding to position $\gamma$ of the string of Pauli $\sigma_i$.
From $\sum_{ik}|a_{ik}|^2=1$ we conclude that $p_{\mathrm{fail}}$ is maximised when $|a_{ik}|=0$ for all $i\notin E_i$. Moreover, it is maximised if $|a_{ik}|=1$ for a single $\sigma_i$.

\EQ{\label{p_inc}p_{\mathrm{fail}}&\leq & \max_{i\in E_i} \sum_T p(T) \prod_{t\in T}\left(\sum_{\theta_t,r_t}p(\theta_t)p(r_t)(\bra{\eta_t^{\nu_T}}\sigma_{i|t}\ket{\eta_t^{\nu_T}})^2\right)
}
This consists of a product of non-negative terms that are less than unity only if there is a non-trivial attack. It follows that the expression is maximum for a $\sigma_i$ that has non-trivial terms in as few as possible positions. This happens when $\sigma_i$ is non-trivial for a single position $\beta$ and this position belongs to either the set $P_{v_\beta}$ or the set $A_{e_\beta}$ depending on whether the non-trivial attack is done on a qubit that belongs to a primary set or an added set. The set that the attack belongs, is a property that does \emph{not} depend on the secret choices of the verifier, as for example, the trap-colouring. We will use the notation $F_\beta$ to refer to a set that can be either $P_{v_\beta}$ or $A_{e_\beta}$. 

We then break $p(T)$ which is the probability of different trap configurations using the structure of the subsets $P_v,A_e$, i.e. $p(T)=p(t_1\in P_{v_1},t_2\in P_{v_2}, \cdots, t_k\in A_{e_1}, \cdots)$. Therefore, given a single attack on set $F_{\beta}$ we can sum over all the other sets (all the other positions do not appear/have explicit dependency in the expression) and obtain $\sum_T p(T)=\sum_{t_\beta\in F_{\beta}}\sum_{t\notin F_\beta}p(T)=\sum_{t_\beta\in F_\beta}p(t_\beta)$. We obtain

\EQ{
p_{\mathrm{fail}}&\leq&  \max_{i\in E_i} \sum_{t_\beta\in F_\beta}\sum_{\theta_{t_\beta},r_{t_\beta}} p(t_\beta) p(\theta_{t_\beta})p(r_{t_\beta})(\bra{\eta_{t_\beta}^{\nu}}\sigma_{i|t_\beta}\ket{\eta_{t_\beta}^{\nu}})^2.
}
It is important to note that $\sigma_{i|t_\beta}$ is identity if $\beta\neq t_\beta$ while it is non-trivial otherwise, therefore $(|F_\beta|-1)$ terms of the sum will be unity, while one term will be less than one\footnote{It turns out that the not-unity term, is zero for measured qubits, while it can be up to 1/2 for output qubits.}. The above expression depends on whether the set $F_\beta$ is output set, or in the case that is a measured set on whether it is a primary or added set. Since the prover chooses which set to attacks the bound will be the highest of these values. We consider separately each case (see Appendix \ref{App:proof1}) and we obtain the highest $p_{\textrm{fail}}$  when there is a single attack on an added qubit. This happens because the added sets have 9 qubits $|A_e|=9$ and the chance of missing the trap is better. Noting that added qubits are not output qubits we thus obtain

\EQ{
p_{\mathrm{fail}}&\leq& \frac1{16|A_{e_\beta}|} \sum_{t_\beta\in A_{e_\beta}}\sum_{\theta_{t_\beta},r_{t_\beta}}(\bra{\eta_{t_\beta}^{\nu}}\sigma_{i|t_\beta}\ket{\eta_{t_\beta}^{\nu}})^2.\nonumber\\
& \leq & \frac1{16|A_e|} \sum_{r_t}\left(8\cdot (|A_{e_\beta}|-1)+8\cdot(\bra{r_{t_\beta}}\sigma_{i|t_\beta}\ket{r_{t_\beta}})^2\right) \nonumber\\
& \leq & \left(\frac 89\right) 
}

\end{proof}

\section{Amplification of the probability of success}

In the previous section we gave a simple construction to directly obtain a verification protocol with constant failure probability $\epsilon$. However, a verification protocol is successful if the $\epsilon$ of the failure probability can be made arbitrarily small. There are two techniques that have been used to amplify the probability of success of a verification protocol. The first one is simpler both conceptually and in terms of experimental requirements, but applies only in the case that the output of the quantum computation performed is classical. The second applies for computations with quantum output as well. We will use both techniques and show that starting with the dotted triple-graph construction we obtain in both cases improvements.

\subsection{Amplification for classical output}\label{Sec:Repetition}

As we mentioned in the introduction, in the case that a quantum computation has a classical output (e.g. solving classical problems or sampling, etc) it is sufficient to have a protocol that is $\epsilon$-verifiable for \emph{any} $\epsilon<1$. The reason is because one can boost this $\epsilon$ and make it arbitrarily small by repeating the protocol sufficiently many times and accepting only when all repetitions agree. This results to an $\epsilon'=\epsilon^d$ which can be made as small as the security level required by choosing the number of repetitions $d$ suitably.

In a previous protocol for classical output in \cite{fk}, the brickwork state was used. The brickwork state can have a single trap and leads to an $\epsilon=\left(1-\frac1n\right)$. By employing the repetition technique (that was used in \cite{efk}) this $\epsilon$ can be boosted to an $\epsilon'=\left(1-\frac1n\right)^d$. However, we see that in order to achieve a constant security level, the number of repetitions needs to increase as the size of the computation increases. For large enough $n$ it is easy to see, that if the single repetition requires $n$ qubits, to maintain a fixed security level  $O(n^2)$ qubits are needed. This is not better than the quantum-output protocol given in \cite{fk} that used a ``dotted-complete graph''. It is worth noting, that even though the complexity scales in the same way as the full ``dotted-complete graph'' protocol of \cite{fk}, the repetition protocol has still a number of practical advantages (it is easier to implement, has smaller coefficient of the leading term and for simple enough computation can be experimentally realised). 

By using the dotted triple-graph construction we can obtain a repetition protocol where we only repeat a constant number of times (and the number of repetitions depends \emph{only} on the security level). It follows, that the dotted triple-graph repetition protocol requires only a linear number of qubits. Again, it does not give better complexity from the general protocol (that allows for quantum output) which we give in the next section. However, it has the same practical advantages as the repetition protocol of the brickwork state.

\begin{algorithm}[H]
\caption{Boosted Verifiable UBQC using dotted triple-graph for classical output}
 \label{prot:RepUBQC}
\begin{itemize}

\item Verifier chooses a computation that has classical and deterministic output (e.g. a decision problem).

\item Verifier chooses a number $d$, where $d=\frac{\log \epsilon}{\log (8/9)}$ and the desired security level is $\epsilon$. 

\item For each $i\in\{1,\cdots,d\}$ \textbf{Verifier follows steps of Protocol \ref{prot:AUBQC}, with random different choices of secret parameters.} If the verifier accepts the computation, they register the classical output as $O_i$ and store it.

\item If the verifier rejected at any single repetition of Protocol \ref{prot:AUBQC}, they reject the overall computation. If not, they compare the classical outputs $O_i$ and if all of them are identical, they accept this output as the output of the computation.
\end{itemize}
\end{algorithm}

\begin{theorem} (Verification)\label{verification2}
Protocol \ref{prot:RepUBQC} is $\left(\frac{8}{9}\right)^d$-verifiable where the output is classical and $d$ is the number of repetitions.
\end{theorem}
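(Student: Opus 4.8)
The plan is to reduce the statement to a clean independence argument and then invoke Theorem~\ref{Verification1}. First I would set up notation: let the event $\mathsf{Acc}_i$ denote ``the verifier accepts at repetition $i$ of Protocol~\ref{prot:AUBQC}'' and let $\mathsf{Wrong}$ denote ``the final registered output $O$ differs from the correct classical output of the chosen computation''. The verifier's overall acceptance event is $\bigcap_{i=1}^d \mathsf{Acc}_i$ together with the condition $O_1 = \cdots = O_d$. Since the computation is deterministic with classical output, if the common value $O$ is wrong then \emph{every} individual repetition that accepted must have accepted an incorrect output; in particular each $\mathsf{Acc}_i$ must hold and each $O_i$ must equal the (same) wrong value. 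So the failure event is contained in $\bigcap_{i=1}^d \{\mathsf{Acc}_i \text{ and } O_i \text{ incorrect}\}$, and it suffices to bound the probability of this intersection.

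The key step is independence across repetitions. The verifier in Protocol~\ref{prot:RepUBQC} runs each repetition with fresh, independently chosen secret parameters $\nu^{(i)}$ (trap-colouring, angles $\theta$, the $r$'s and $d$'s, measurement orderings within each set). Crucially, by blindness, the prover obtains no information distinguishing these choices beyond what the protocol leaks, and the leaked classical transcript of one round carries no information about the secret parameters of another. Hence, even though the prover may adapt its (possibly entangled, cross-round) strategy, the probability that it gets an incorrect output accepted in round $i$ is bounded by $8/9$ \emph{conditioned on} the entire history of the other rounds. More precisely: fix any prover strategy across all $d$ rounds; by the tower property, $\Pr[\bigcap_i (\mathsf{Acc}_i\wedge O_i\text{ incorrect})] = \mathbb{E}\big[\prod_{i} \Pr[\mathsf{Acc}_i\wedge O_i\text{ incorrect}\mid \text{round-}i \text{ input and outputs of all rounds }\neq i]\big]$, and each conditional factor is at most $8/9$ because, conditioned on everything external to round $i$, the induced single-round attack on round $i$ is a valid (mixed) attack to which Theorem~\ref{Verification1} applies with bound $8/9$. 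This yields $p_{\mathrm{fail}} \le (8/9)^d$, and plugging $d = \log\epsilon / \log(8/9)$ gives $(8/9)^d = \epsilon$, matching Definition~\ref{epsilon-verification}.

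The main obstacle is making the independence/conditioning argument rigorous against a \emph{coherent} prover that keeps a quantum register alive between rounds and correlates its attacks. One cannot simply multiply per-round failure probabilities as if the rounds were independent trials; instead one must argue that Theorem~\ref{Verification1}'s bound holds round-by-round for \emph{any} fixed side-information the prover may have extracted from the other rounds. The clean way to do this is to observe that the proof of Theorem~\ref{Verification1} only used (i) the convex-combination-of-Paulis structure forced by blindness and (ii) the per-set trap-hiding probability $1/|A_e|$ or $1/|P_v|$ — both of which survive conditioning on an arbitrary auxiliary quantum state held by the prover, since that auxiliary state is uncorrelated with round $i$'s secret parameters. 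I would state this as a short lemma (``the single-round failure bound holds relative to any prover auxiliary input independent of that round's secret parameters'') and then close the induction on $d$. A secondary, easier point to handle carefully is the determinism assumption: I would remark that it is what collapses ``all $O_i$ agree on a wrong value'' to ``every accepting round accepted the same specific wrong output'', so that the bound is on a genuine intersection rather than a union.
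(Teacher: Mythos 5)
Your proposal is correct and follows essentially the same route as the paper: the failure event (all repetitions accept and agree on a wrong classical output) is contained in the event that every repetition individually accepts an incorrect output, and each per-round event is bounded by $8/9$ via Theorem~\ref{Verification1}, giving $(8/9)^d$. The paper's own proof is in fact terser than yours---it simply asserts that the probability that all $d$ repetitions fail is $(8/9)^d$ without discussing cross-round correlations---so the independence lemma you sketch is extra rigor the paper leaves implicit; just note that it should be phrased via the sequential chain rule (conditioning round $i$ only on the preceding rounds, whose transcripts are independent of round $i$'s fresh secret parameters) rather than conditioning each factor on all rounds $\neq i$, which is not an identity.
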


\begin{proof}
We have multiple repetitions and if all of them return the same output $O$, then the probability that this is not the correct output is bounded by the probability that \emph{all} repetitions failed (and resulted to the same deviation). Since the different repetitions have the same outcome it means that if a single of those repetitions is successful then the output $O$ is the correct output. From Theorem \ref{Verification1} we know that the probability that a single repetition fails, is $8/9$. Then the probability that all the $d$ repetitions fail is $\left(\frac89\right)^d$. 
\end{proof}

\noindent\textbf{Alternative construction for classical output.} In the case of classical output, there is an alternative construction to the dotted triple-graph that could decrease further the (linear) overhead. In particular, instead of having the dotted triple-graph $DT(G)$ one could consider three copies of the dotted base graph $D(G)$. We will name this the \emph{three dotted copies} construction. The one copy will be used for computation, while the other two for white traps (on primary vertices) and black traps (on added vertices). This construction is global, in the sense that the decision of which vertices are in which graph is made from the beginning and cannot be decided independently per base graph vertex $v_i$. It follows that the location of the traps is totally correlated globally and there is no way to amplify the success probability in the quantum output case. This is the reason we focused on the dotted triple-graph construction for the quantum output case. 
For the classical output however, the three dotted copies construction works.

\begin{algorithm}[H]
\caption{Boosted three dotted copies Verifiable UBQC for classical output}
 \label{prot:Rep3copies}
\begin{itemize}

\item Verifier chooses a computation that has classical and deterministic output (e.g. a decision problem).

\item Verifier chooses a number $d$, where $d=\frac{\log \epsilon}{\log (2/3)}$ and the desired security level is $\epsilon$. 

\item For each $i\in\{1,\cdots,d\}$ \textbf{Verifier follows steps of Protocol \ref{prot:AUBQC}, using three dotted-copies instead of dotted triple-graph and with random different choices of secret parameters for every run.} If the verifier accepts the computation, they register the classical output as $O_i$ and store it.

\item If the verifier rejected at any single repetition of the modified Protocol \ref{prot:AUBQC}, they reject the overall computation. If not, they compare the classical outputs $O_i$ and if all of them are identical, they accept this output as the output of the computation.
\end{itemize}
\end{algorithm}

\begin{theorem} (Three dotted-copies Verification)\label{verification_3copies}
Protocol \ref{prot:Rep3copies} is $\left(\frac{2}{3}\right)^d$-verifiable where the output is classical and $d$ is the number of repetitions.
\end{theorem}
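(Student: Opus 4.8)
The plan is to mirror the structure of the proof of Theorem \ref{verification2}, replacing the single-run bound $8/9$ coming from the dotted triple-graph with the single-run bound $2/3$ coming from the three dotted copies construction, and then invoking independence of the $d$ runs. Concretely, I would first establish the single-run claim: Protocol \ref{prot:AUBQC} run with three dotted copies instead of $DT(G)$ is $(2/3)$-verifiable. The argument runs through exactly the same five steps sketched in the proof of Theorem \ref{Verification1}: a generic prover deviation is a Kraus map expanded in Pauli strings $\sigma_i$; blindness forces the relevant contribution to the failure probability into a convex combination of Pauli attacks; the bound is maximised by a single non-trivial Pauli attack at one position; and that position lies in exactly one of the three dotted copies. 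The only place the analysis changes is \textbf{step 5}: in the three dotted copies construction each base-location $v$ (resp. $e$) of the dotted base graph is present in all three copies, exactly one of which carries the computation qubit, one the (white) trap and one is disentangled; a single-qubit attack therefore hits the trap copy with probability $1/3$ and misses it with probability $2/3$. Summing over the verifier's secret assignment of the three copies to (computation, white trap, black trap) roles and over the random $\theta_t,r_t$ as in Eq. \eqref{p_inc} yields $p_{\mathrm{fail}} \le 2/3$ for one run.

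Given the single-run bound, the amplification argument is identical to that of Theorem \ref{verification2}. Suppose all $d$ runs of the modified Protocol \ref{prot:AUBQC} accept and return the same classical output $O$. If even one of these runs is ``successful'' in the sense that its accepted output equals the true output of the computation, then $O$ is correct. Hence the event ``$O$ is incorrect while all runs accepted and agreed'' is contained in the event ``every one of the $d$ runs failed.'' Because the verifier uses fresh, independent secret parameters (trap-role assignment, angles $\theta_i$, bits $r_i,d_i$) in each repetition, the failure events of the $d$ runs are independent, so this probability is at most $(2/3)^d$. With the stated choice $d = \log\epsilon/\log(2/3)$ this equals $\epsilon$, which proves the theorem.

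The main obstacle is the single-run estimate, i.e. re-running steps 1--5 of the proof of Theorem \ref{Verification1} for the three dotted copies resource and checking that the combinatorics of where the traps sit still forces the missing-probability down to $2/3$ rather than $8/9$. In particular one must verify that the reduction-to-a-single-Pauli-attack argument (steps 3--4) goes through unchanged — it does, since it only uses blindness and convexity, not the specific graph — and that, unlike the $DT(G)$ case where an added set $A_e$ has nine vertices and so offers a $8/9$ miss probability, here every base-location has exactly three incarnations, giving the uniform $1/3$ hitting probability regardless of whether the attacked qubit is primary or added. Once that single-run fact is in hand, the repetition step is routine and identical to the proof of Theorem \ref{verification2}.
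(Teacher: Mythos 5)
Your proposal is correct and follows essentially the same route as the paper: reduce to a single non-trivial Pauli attack via steps 1--4 of Theorem \ref{Verification1}, observe that the global role-assignment of the three copies gives a $1/3$ probability that the attacked position is a trap (white if primary, black if added) with deterministic detection since all qubits are measured for classical output, and then amplify by independent repetition to $(2/3)^d$. The only cosmetic difference is that the paper states explicitly that $\bra{r_t}\sigma_{X/Y}\ket{r_t}=0$ is what makes detection certain upon hitting a trap, which your appeal to the sum over $\theta_t,r_t$ in Eq.~(\ref{p_inc}) implicitly contains.
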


\begin{proof}

Following the proof of Theorem \ref{Verification1}, at step 2 in order to corrupt the computation the prover needs to make at least one non-trivial attack. However, the prover is blind about which of the three graphs is the computation, white and black trap graph. Therefore, it has probability $1/3$ that the attack coincides with a trap graph of the same type of the attack location (i.e. if it attacks a primary vertex, then with probability $1/3$ the attack was on a qubit that belongs in the white graph, while if the attack was on an added vertex with probability $1/3$ the attack was on a qubit that belongs in the black graph). For classical output the non-trivial attacks are $\{X,Y\}$ only (all qubits are measured), and thus the attack is deterministically detected when it hits a trap, as $\bra{\eta_{t}^{\nu}}\sigma_{X/Y}\ket{\eta_{t}^{\nu}}=\bra{r_t}\sigma_{X/Y}\ket{r_t}=0$.
Therefore the probability of failure is $p_{\textrm{fail}} \leq 2/3$. 

To amplify this probability, we can simply repeat the protocol $d$ times, and if all classical outputs agree in all the runs then the probability that the computation was corrupted is bounded by $p_{\textrm{fail}}\leq \left(\frac23\right)^d$. Moreover, the number of qubits required per repetition, are $3|V(G)|+3|E(G)|\leq 3(1+c)|V(G)$. Both in terms of failure probability and in terms of the (linear in both cases) number of qubits per repetition, the three copies construction gives better result.

\end{proof}

\subsection{Amplification for quantum output}\label{Sec:FT_Verification}

We now turn to the general case, where the output of the computation can be quantum. Our main result is that our dotted triple-graph construction leads to an exponentially-secure verification protocol for quantum output with only linear overhead. Similar to \cite{fk} we will use a technique that exploits the possibility to encode the computation in a fault-tolerant way in order to amplify the probability of success of the protocol.  The particular size of the boosting achieved will depend on the fault-tolerant code that is used. Here we treat the protocol in full generality. 

The general idea is that the computation is encoded using a fault-tolerant encoding, while the traps are still single (non-encoded) qubits. Therefore, while a single error on a trap leads to a rejection of the computation, to corrupt the actual output of the computation many errors on computation qubits are required. The prover needs to simultaneously avoid hitting any single trap and in the same time hit many computation qubits in order to corrupt the output. 

\begin{algorithm}[H]
\caption{Boosted Verifiable UBQC for quantum output, using dotted triple-graph and Fault-Tolerant Encoding}
 \label{prot:FTUBQC}
\begin{itemize}

\item Verifier chooses a base graph $G$ and a measurement pattern $\mathbb{M}_{\mathbb{\textrm{Comp}}}$ on the dotted base graph $D(G)$ that implements the desired computation in a fault-tolerant way, that can detect or correct errors fewer than $\delta/2$.

\item \textbf{Verifier follows steps of Protocol \ref{prot:AUBQC}.}
\end{itemize}
\end{algorithm}

\begin{theorem} (Verification)\label{verification3}
Protocol \ref{prot:FTUBQC} is $\left(\frac{8}{9}\right)^d$-verifiable for quantum or classical output, where $d=\lceil\frac\delta{2(2c+1)}\rceil$,  $c$ is the maximum degree of the base graph and $\delta$ is the number of errors tolerated on the base graph $G$.
\end{theorem}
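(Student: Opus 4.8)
The plan is to combine the attack-counting structure of the proof of Theorem \ref{Verification1} with the combinatorial bound of Theorem \ref{ICL2}, following the fault-tolerant argument of \cite{fk} but adapted to the dotted triple-graph. As in the proof of Theorem \ref{Verification1}, the deviation of the prover is written in terms of Kraus operators expanded over Pauli strings $\sigma_i$, and after exploiting blindness (steps 1--3) we reduce to a convex combination over Pauli attacks; the bound on $p_{\mathrm{fail}}$ is again a sum over trap configurations $T$ of a product of single-qubit factors $\bra{\eta_t^{\nu_T}}\sigma_{i|t}\ket{\eta_t^{\nu_T}})^2$ for $t\in T$. The key change is in step 2: since the computation is now encoded fault-tolerantly with a code detecting/correcting fewer than $\delta/2$ errors, a Pauli string $\sigma_i$ only corrupts the logical output if it acts non-trivially on at least $\delta$ computation qubits. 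Hence the relevant set $E_i$ consists of strings with non-trivial action in at least $\delta$ base-locations, and step 4 now tells us the prover's optimal attack is the one with exactly $\delta$ non-trivial single-qubit Paulis, each sitting in a distinct base-location (primary or added).

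Next I would argue that among these $\delta$ attacked base-locations, the prover cannot avoid the traps in too many of them simultaneously. Here is where Theorem \ref{ICL2} enters: take $S$ to be the set of $\delta$ attacked base-locations; then there is a subset $S'\subseteq S$ of at least $d=\lceil \delta/(2(2c+1))\rceil$ base-locations that are ICL. Because these locations are independently colourable, the events ``the attack at location $\ell$ misses the trap qubit'' for $\ell\in S'$ are genuinely independent over the verifier's random trap-colouring — the colour choices within the corresponding sets $F_\ell$ are unconstrained by one another. For each single such location, the analysis is exactly the one already carried out in the proof of Theorem \ref{Verification1}: within a set $F_\ell$ (either some $P_v$ with $|P_v|=3$ or some $A_e$ with $|A_e|=9$) the probability, over the position of the trap and over $\theta_t,r_t$, that a fixed non-trivial single-qubit Pauli fails to disturb the trap is at most $8/9$ (the worst case, achieved on added sets). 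Since we have $d$ independent such locations, the joint probability of missing all of them is at most $(8/9)^d$, giving $p_{\mathrm{fail}}\le (8/9)^d$ as claimed. The statement holds for classical output as well since that is only a more restricted (and more easily detected) special case.

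The main obstacle I expect is making the independence argument fully rigorous. One must be careful that restricting attention to the ICL subset $S'$ really does yield a product bound: the expression for $p_{\mathrm{fail}}$ after step 4 is $\max_{i\in E_i}\sum_T p(T)\prod_{t\in T}(\cdots)$, and the product ranges over the entire trap set $T$, not just over $S'$. One bounds by keeping only the factors at base-locations in $S'$ (all other factors are $\le 1$) and then uses that, by the ICL property and Lemma \ref{ICL3}, the marginal distribution of the trap positions at the sets $F_\ell$, $\ell\in S'$, factorizes — each $P_v$ gets an independent uniform choice of which of its three qubits is the white trap, and each $A_e$ an independent uniform choice of which of its nine qubits is the black trap — so $\sum_T p(T)\prod_{\ell\in S'}(\cdots)$ splits as $\prod_{\ell\in S'}(\sum_{t_\ell}p(t_\ell)(\cdots))$, each factor $\le 8/9$. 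A secondary subtlety is checking that the fault-tolerance threshold is correctly accounted: the code tolerates $\delta$ errors on the base graph $D(G)$, an attack on $\delta$ base-locations of $DT(G)$ induces at most $\delta$ errors on $D(G)$ after the break operations, and $S'$ is extracted from those $\delta$ locations, which is exactly why the factor $2c+1$ from Theorem \ref{ICL2} appears in $d$.
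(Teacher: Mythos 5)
Your overall architecture is the same as the paper's: reduce to Pauli attacks via blindness, argue the optimal attack uses the fewest base-locations compatible with $E_i$, extract an ICL subset $S'$ via Theorem \ref{ICL2}, factorize $\sum_T p(T)\prod(\cdots)$ over $S'$ using the independence of local-colourings, and bound each factor by $8/9$. The factorization step you flag as the "main obstacle" is in fact handled exactly as you describe (the paper sets $\sigma_{i|\gamma}=I$ outside $S'$, which is the same move as keeping only the $S'$ factors), so that part is fine.

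The genuine gap is in the fault-tolerance accounting, and your numbers do not cohere. You take $E_i$ to require non-trivial action on at least $\delta$ base-locations and then claim the ICL subset of a set of $\delta$ locations has size $\lceil\delta/(2(2c+1))\rceil$; but Theorem \ref{ICL2} applied to $n=\delta$ locations gives $\lceil\delta/(2c+1)\rceil$, i.e.\ $2d$, not $d$. The missing ingredient is the observation (explicit in the paper) that a non-trivial error on an \emph{added} qubit $a_{e_{ij}}$ of $D(G)$ is equivalent to simultaneous local errors on \emph{both} adjacent primary qubits $p_{v_i},p_{v_j}$. Since the code tolerates $\delta$ errors counted on the base graph $G$, an adversary can reach that threshold by attacking only $\delta/2$ base-locations of $D(G)$ (choosing added locations, each worth two base-graph errors). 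Hence $E_i$ requires only $\geq\delta/2$ attacked base-locations, the prover's optimal attack sits on exactly $\delta/2$ of them, and Theorem \ref{ICL2} then yields $|S'|\geq\frac{\delta/2}{2c+1}=\frac{\delta}{2(2c+1)}=d$. Your "secondary subtlety" paragraph gets this backwards ("an attack on $\delta$ base-locations of $DT(G)$ induces at most $\delta$ errors on $D(G)$"): the relevant direction is that few attacked locations on $D(G)$ can induce \emph{twice as many} errors on $G$, which is precisely where the extra factor of $2$ in the denominator of $d$ comes from. With that correction the rest of your argument goes through unchanged.
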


\begin{proof}
We assume that there is a fault-tolerant encoding of the computation, that when done in MBQC, corrects or detects all errors that have fewer than $\delta$ number of errors when the computation is performed on the base graph $G$. Any operation on a measured qubit, diagonal in the computational basis ($\sigma_i\in\{I,Z\}$) does not alter the computation. Therefore errors that can contribute to corrupting a single logical qubit, involve errors $\sigma_i\in\{X,Y\}$ for measured qubits and $\sigma_i\in\{X,Y,Z\}$ for output qubits. 

Considering the dotted base graph $D(G)$, one can easily see that any (non-trivial) error on an added qubit $a_{e_{ij}}$, is equivalent with a local error on each of the two primary qubits that are neighbours $p_{v_i},p_{v_j}$ (see also \cite{fk}). If to corrupt a computation one needs $\delta$ errors on primary qubits of the base graph $G$, it follows that to corrupt the computation when done on the dotted base graph $D(G)$ one needs at least $\delta/2$ errors on qubits of the dotted base graph $D(G)$.

We now turn back to step 2 of the proof of theorem \ref{Verification1} and we see that the set $E_i$ of attacks that could possibly corrupt the computation, should include non-trivial attack in at least $\delta/2$ \emph{different} sets $P_v,A_e$ (which collectively we call $F_\beta$). It is important to note that within each of the sets $F_\beta$ there is a single computation-graph qubit and therefore not only the prover needs to perform $\delta/2$ non-trivial attacks, but they should also be done on at least $\delta/2$ different location sets. The prover of course could choose to attack multiple qubits of the same set $F_\beta$, but in order to hit at least $\delta/2$ computation qubits, the sets that they perform non-trivial attacks should also be at least $\delta/2$.  

Any given attack $\sigma_i$ is characterised by the set $S_i$ of locations on the dotted base graph $D(G)$, that it has at least one non-trivial attack, which in the case $\sigma_i\in E_i$ it means that $|S_i|\geq\delta/2$ .

Following step 3 and 4 of the proof of theorem \ref{Verification1},  we reach eq. (\ref{p_inc}). From this expression we can again see, that the fewer the positions of non-trivial attack (consistent with $E_i$) the greatest the value of this bound is. We already know that we need at least $\delta/2$ sets $F_\beta$ with non-trivial attacks, so it follows that the maximum is achieved when there are exactly $\delta/2$ different sets $F_\beta$ with exactly a single attack in each.

To proceed further we need to decompose the probability of different configuration of traps $p(T)$ to this of individual sets. This is not in general possible since there are correlation between the traps of (neighbouring) sets. To this point we should note that fixing a configuration of traps is identical with giving a trap-colouring as in definition \ref{trap colouring}. 

From Theorem \ref{ICL2}, we know that given a collection $S_i$ of $\delta/2$ locations on the dotted base graph $D(G)$ there are at least a collection $S'_i$ of $\frac\delta{2(2c+1)}$ that are independently colourable locations, i.e. $|S_i'|=\lceil\frac\delta{2(2c+1)}\rceil$. To obtain an upper bound on the failure probability, we set $\sigma_{i|\gamma}=I$ for all $\gamma$ that belong to locations in $S_i\setminus S'_i$. This change is non-decreasing for the expression for the bound given in eq. (\ref{p_inc}). Now the only locations that have non-trivial attacks, are those in $S_i'$ and we have

\EQ{\label{p_inc2}
p_{\mathrm{fail}}&\leq&  \max_{i\in E_i} \prod_{\beta=1}^{|S_i'|}\sum_{t_\beta\in F_\beta}p(t_\beta)\sum_{\theta_{t_\beta},r_{t_\beta}}  p(\theta_{t_\beta})p(r_{t_\beta})(\bra{\eta_{t_\beta}^{\nu}}\sigma_{i|t_\beta}\ket{\eta_{t_\beta}^{\nu}})^2
}
where we used the fact that $p(T)=p(t_1\in P_{v_1},t_2\in P_{v_2}, \cdots, t_k\in A_{e_1}, \cdots)$ and for a collection $S'_i$ of independent colourable locations, 

\EQ{
\sum_{t_\beta\in  S_i'}\left(\sum_{t_\beta\notin   S_i'}p(T)\right)=\sum_{t_\beta\in  S_i'}\left(\prod_{\beta=1}^{|S_i'|}p(t_\beta)\right)
}
We can see this due to the fact that after summing all locations apart from those in $S'$, the probability for choosing the location of the trap within each set is independent and thus the joint probability is simply their product. Now, each term in the product of eq. (\ref{p_inc2}) is bounded by $8/9$ as proven in the previous section and therefore we obtain

\EQ{\label{ft_bound}
p_{\mathrm{fail}}&\leq&\left(\frac{8}{9}\right)^d
}
where we define $d=|S_i'|=\lceil\frac\delta{2(2c+1)}\rceil$
\end{proof}
As a final remark, we should note that the value $d=\lceil\frac\delta{2(2c+1)}\rceil$ is the minimum number of ICL that exist and in particular cases this number can be greater and thus the probability of success of the verification protocol also becomes greater for those cases.

\section{Consequences for existing verification protocols}\label{Sec:Examples}

The dotted triple-graph construction can be used to improve a number of existing verification protocols and here we give indicatively three of them. First we consider the specific case where the computation is done using the Raussedorf-Harrington-Goyal (RHG) \cite{rhg} encoding and the related graph is $\mathcal{G}_{\mathcal{L}}$. Following our construction instead of the dotted-complete graph of \cite{fk}, we obtain the dotted triple-RHG graph $DT\mathcal{G}_{\mathcal{L}}$. This graph state has linear number of qubits (as the maximum degree of the graph is $4$). With the same choices of parameters as in \cite{fk}, it can detect or corrects any deviation that has fewer than $\delta/2$ errors. From our results of the previous section, it follows that we obtain a linear-complexity verification protocol with exponential security bound given by $p_{\mathrm{fail}}\leq\left(\frac89\right)^{\lceil\frac\delta{18}\rceil}$.

The second application is that it can be used to improve verifiable fault-tolerant protocols. Assuming that there are errors  due to noise (non-adversarial), the protocols given earlier in the text and in other VBQC  protocols could face a problem. In particular, honest errors due to noise could make trap measurement to fail and lead us to reject the output even in honest runs where the computation is not corrupted. Here we should stress that both in this paper and in \cite{fk} the use of fault-tolerant encoding was in order to amplify the security and \emph{not} to correct the computation from errors caused by honest noise.  However, one \emph{can} construct a fault-tolerant verification protocol, at least for classical output, and one such example is presented in \cite{gkw2015}. The starting graph used to obtain the fault-tolerant protocol of \cite{gkw2015} was the brickwork graph that has a single trap. Then a fault-tolerant encoding was done followed by the repetition technique used to amplify the success probability. However the number of repetitions to maintain a constant level of security increased with the size of the computation. By using the dotted triple-brickwork instead, as the first step of the construction in \cite{gkw2015}, we can achieve exponential security with a constant number of repetitions. This would essentially bring down the number of qubits required from $O(n^2)$ to $O(n)$. 

The third application is that we can directly use the dotted triple-graph construction for the verifiable measurement-only protocols \cite{tomo2014,hm2015}. In particular \cite{tomo2014} is essentially the online version of \cite{fk}, and the technique to include traps in the graph is equivalent. It follows that if the resource used instead of a dotted-complete graph is a dotted triple-RHG graph then the number of qubits required will reduce from quadratic to linear. In \cite{hm2015} the verifier, instead of including traps, uses $2k+1$ copies of a universal graph. In order to test the honesty of the prover it makes stabiliser measurements to $2k$ copies of the desired graph while performs the computation on the final copy. However, there is always at least a $1/(2k+1)$ probability that the computation is corrupted and not detected. Using the dotted triple-graph construction, modified for the measurement-only protocols, this probability can be made exponentially small while still using only linear number of qubits.

\section*{Acknowledgements}
The authors would like to thank Vedran Dunjko, Alexandru Gheorghiu and Theodoros Kapourniotis for useful discussions. The authors are also grateful to Joe Fitzsimons for useful discussion regarding a robust fault tolerance scheme as proposed in Section 5. EK acknowledges funding through EPSRC grants EP/N003829/1 and EP/M013243/1.

\appendix

\section{Proof of Verification}\label{App:proof1}

Here we give the detailed proof of Theorem \ref{Verification1}. In some parts we follow closely the proof of Theorem 8 of \cite{fk}. The proof has five steps. In \textbf{step 1} we express the attack using Kraus operators and Pauli matrices, in \textbf{step 2} we show that in order to lie in the incorrect subspace, at least one non-trivial attack to one qubit (of the dotted triple-graph)  is required, and then we will replace the projection to the incorrect subspace with this restriction on the sum of allowed attacks. In \textbf{step 3} we will exploit the blindness of the prover to reduce the attack to Pauli attacks. In \textbf{step 4} we will show that the fewer the non-trivial attacks the greater the probability for the adversary, and thus we will restrict to the fewer allowed attacks (a single one). Finally, in \textbf{step 5} we will use a suitable partition of the qubits which will then leads to a constant bound for the $p_{\textrm{fail}}$.

\noindent\textbf{ Step 1}: First we note that after tracing out the prover's register, the unitary $\Omega$ becomes a completely positive trace preserving map (CPTP), and can be expressed in terms of the Kraus operators $\{\chi_k\}$, where $\sum_k\chi_k\chi_k^\dagger=\mathbb{I}$. Moreover we express each Kraus operator as linear combination of Pauli operators $\chi_k=\sum_i \alpha_{ki}\sigma_i$ and $\sum_{k,i}\alpha_{ki}\alpha_{ki}^*=1$. The matrix $\sigma_i$ is a tensor product of Pauli matrices, where if we want to specify the Pauli acting on qubit $\gamma$ we will denote it as $\sigma_{i|\gamma}$. We then get

\EQ{
p_{\textrm{fail}}&=&\sum_\nu p(\nu)\Tr (P^\nu_{\textrm{incorrect}}B_j(\nu)) \nonumber\\
&=& \sum_{b,i,j,k}\Tr\left(\sum_\nu p(\nu)\alpha_{ki}\alpha^*_{kj}(P_\bot\otimes_{t\in T}\ket{\eta^{\nu_T}_t}\bra{\eta^{\nu_T}_t})\ket{b+c_r}\bra{b}\right.\nonumber\\
& & \left. C_{\nu_C,b}\sigma_i \mathcal{P}\ket{\Psi^{\nu,b}}\bra{\Psi^{\nu,b}}\mathcal{P}^\dagger \sigma_j C^\dagger_{\nu_C,b}\ket{b}\bra{b+c_r}\right)
}

\noindent\textbf{ Step 2}: Again following \cite{fk}, we can see that only terms that satisfy
 
\EQ{
\Tr(P_\bot\sigma_i\mathcal{P}\ket{\Psi^{\nu,b}}\bra{\Psi^{\nu,b}}\mathcal{P}^\dagger\sigma_j)\neq 0
} 
contribute to the $p_{\textrm{fail}}$. The terms that obey this are those necessarily within those that $|B_i|+|C_i|+|D^O_i|\geq 1$, which we will denote as $i\in E_i$ (and similarly $j\in E_j$), where the sets are defined as:

\EQ{
A_i&=&\{\gamma\textrm{ s.t. }\sigma_{i|\gamma}=I \textrm{ and }\gamma\textrm{ qubit of the dotted triple-graph}\}\nonumber\\
B_i&=&\{\gamma\textrm{ s.t. }\sigma_{i|\gamma}=X \textrm{ and }\gamma\textrm{ qubit of the dotted triple-graph}\}\nonumber\\
C_i&=&\{\gamma\textrm{ s.t. }\sigma_{\beta|\gamma}=Y \textrm{ and }\gamma\textrm{ qubit of the dotted triple-graph}\}\nonumber\\
D_i&=&\{\gamma\textrm{ s.t. }\sigma_{\beta|\gamma}=Z \textrm{ and }\gamma\textrm{ qubit of the dotted triple-graph}\}
}
and the superscript $O$ denotes subset of those sets that the $\gamma$ is output qubit. In other words, to corrupt the computation one either needs to flip the outcome of a measured qubit, or make any Pauli (other than the identity) if the attack is on the quantum output.

We have now imposed that the attacks $\sigma_i$ that contribute have at least one non-trivial Pauli attack at a qubit of the DT(G). This is not a sufficient condition to corrupt the computation in general (and send it to the $P_\bot$ subspace), but is a necessary condition. To see this, we note that if we consider a $\sigma_i$ where $i\notin E_i$, then there is no choice of the secret parameters that would bring the state in the $P_\bot$ subspace. Here we take the worse-case scenario, where we assume that if there is some choice of secret parameters that a given attack could corrupt the computation, then we assume that it already is in the subspace $P_\bot$ and we only check what is the probability that this attack did not triggered any trap. For protocol \ref{prot:AUBQC} it is a single attack that could corrupt the computation. We then replace the projection on the $P_\bot$ subspace, with a restriction on the possible attacks, i.e. at the sum we only have terms corresponding to attacks that belong to the set $E_i$. Note, that if the computation was encoded in an fault-tolerant way (as is done in Section \ref{Sec:FT_Verification}), then the set $E_i$ requires greater number of non-trivial attacks. For now we take the more conservative view. 

We then obtain the following expression:

\EQ{
p_{\textrm{fail}}\leq \sum_{k,b'} \sum _{\nu} p(\nu) \Tr \left((\otimes_{t\in T}\ket {\eta_t^{\nu_T}} \bra {\eta_t^{\nu_T}}\otimes \ket{b'}\bra{b'}) \left( \sum_{i\in E_i} \alpha_{ki} \sigma_i\right) \mathcal{P}  \ket {\Psi^{\nu,b'}}\bra {\Psi^{\nu,b'}} \mathcal{P}^\dagger \left( \sum_{i\in E_i} \alpha_{ki} \sigma_i\right)^\dagger \right)
}
where $b'=\{b_i\}_{i\notin T}$ a substring of $b$ that excludes the value for the trap measurements (and we used that $\bra{\eta_t^{\nu^T}}\ket{b_t}=\delta_{\eta_t^{\nu_T},b_t}$).

\noindent\textbf{ Step 3}: The next step is to exploit the fact that summing over the secret parameters of the verifier result to the prover being blind, and show that the only attacks that contribute are Pauli attacks, i.e. attacks that $\sigma_{i|\gamma}=\sigma_{j|\gamma}$ for all $\gamma$. Summing over $\nu_C$ we obtain

\EQ{
p_{\textrm{fail}}= \sum_{k,\nu_T} \sum_{i \in E_i} \sum_{j \in E_j} \alpha_{ik} \alpha_{jk}^* p(\nu_T) \Tr \left(\otimes_{t\in T}\ket{\eta_t^{\nu_T}}\bra {\eta_t^{\nu_T}}\sigma_i \left(\otimes_{t\in T}\ket {\eta_t^{\nu_T}} \bra {\eta_t^{\nu_T}} \otimes_{t\in T} \ket{\delta_t}\bra{\delta_t} \otimes {\frac{I}{\text{Tr}(I)}} \right) \sigma_j \right)
}
As all Pauli matrices but the identity are traceless, all terms in the sum are zero unless $\sigma_{i|\gamma}=\sigma_{j|\gamma}$ apart from the case that $\gamma\in T$. Then we use the fact that $\sum_{\theta_t,r_t}\Tr (\bra{\eta^{\nu_T}_t}\sigma_i\ket{\eta^{\nu_T}_t}\bra{\eta^{\nu_T}_t}\sigma_j\ket{\eta^{\nu_T}_t})=0$ unless $\sigma_{i|t}=\sigma_{j|t}$ in the case that $t\in O$ and that for measured traps it suffices to sum over $r_t$, i.e. $\sum_{r_t}\Tr (\bra{\eta^{\nu_T}_t}\sigma_i\ket{\eta^{\nu_T}_t}\bra{\eta^{\nu_T}_t}\sigma_j\ket{\eta^{\nu_T}_t})=0$ unless $\sigma_{i|t}=\sigma_{j|t}$. We then conclude that only terms that contribute are those that $\sigma_i=\sigma_j$. We thus obtain:

\EQ{
p_{\mathrm{fail}}&\leq & \sum_k\sum_{i\in E_i}|\alpha_{ki}|^2\sum_T p(T) \prod_{t\in T}\left(\sum_{\theta_t,r_t}p(\theta_t)p(r_t)(\bra{\eta_t^{\nu_T}}\sigma_{i|t}\ket{\eta_t^{\nu_T}})^2\right)
}
where we broke the sum of $\nu_T$ to the choice of positions $T$, and the random choices of $\theta_t,r_t$, and we have taken the product of all those terms corresponding to the various white and black traps.

\noindent\textbf{ Step 4}: In this step, we will prove that to maximise the value of the bound of the probability of $p_{\textrm{fail}}$, the best strategy is to do the fewer number of attacks allowed by the constraint obtained at step 2, which in our case, is a single attack. Then at the next step we will bound this maximum value. We have

\EQ{
p_{\mathrm{fail}}&\leq & \sum_k\sum_{i\in E_i}|\alpha_{ki}|^2 f(i)
}
where $f(i):=\sum_T p(T) \prod_{t\in T}\left(\sum_{\theta_t,r_t}p(\theta_t)p(r_t)(\bra{\eta_t^{\nu_T}}\sigma_{i|t}\ket{\eta_t^{\nu_T}})^2\right)$. From $\sum_{ik}|a_{ik}|^2=1$ we conclude that $p_{\mathrm{fail}}$ is maximised when $|a_{ik}|=0$ for all $i\notin E_i$. Then we have a convex combination of values $f(i)$. Let $f(m)=\max_{i\in E_i} f(i)$, then and it follows that if this is maximum for the single value $m$, then by choosing $|a_{ik}|=0$ for all $i\neq m$ the bound for $p_{\mathrm{fail}}$ is maximised.

\EQ{p_{\mathrm{incorrect}}&\leq& f(m)=\max_{i\in E_i} f(i)\nonumber\\
&\leq& \max_{i\in E_i} \sum_T p(T) \prod_{t\in T}\left(\sum_{\theta_t,r_t}p(\theta_t)p(r_t)(\bra{\eta_t^{\nu_T}}\sigma_{i|t}\ket{\eta_t^{\nu_T}})^2\right)}
In other words, we obtain a bound by considering a single $\sigma_i$ that belongs to the set $E_i$ and maximises the expression we have. The following expression involves a product of positive numbers, that are all less or equal to unity:

\EQ{\label{term1}\prod_{t\in T}\left(\sum_{\theta_t,r_t}p(\theta_t)p(r_t)(\bra{\eta_t^{\nu_T}}\sigma_{i|t}\ket{\eta_t^{\nu_T}})^2\right)}
In particular we can see that the terms in the product of eq. (\ref{term1}) are unity for all trap positions that $\sigma_{i|t}$ is trivial, i.e. $\sigma_{i|k}\notin \{X,Y\}$ if $k$ is not output, or $\sigma_{i|k}\notin \{X,Y,Z\}$ if $k$ is an output qubit. It is clear that this expression is bigger the more terms containing trivial attacks on traps. In other words, if we have two possible attacks $\sigma_i$ and $\sigma_{i'}$, where for all $\gamma$ that $\sigma_{i|\gamma}$ is non-trivial it is equal to $\sigma_{i'|\gamma}$ (but there are $\gamma$ that $\sigma_{i'|\gamma}$ is non-trivial while $\sigma_{i|\gamma}$ is trivial), then $f(i)\geq f(i')$. Therefore the term that maximises $p_{\mathrm{fail}}$ corresponds to an attack $\sigma_i$ that has the fewest (possible, i.e. compatible with $E_i$) non-trivial terms.

From step 2 we obtained that the set $E_i$ has at least one non-trivial Pauli attack, so it follows that the bound of the $p_{\textrm{fail}}$ we compute is maximised when there is exactly one non-trivial Pauli attack. It is important to note however, that the set $E_i$ will be different in Section \ref{Sec:FT_Verification} where we consider fault-tolerant encoding of the computation and the corresponding $\sigma_i$ will involve greater number of non-trivial attacks. In that case, the set of attacks that can possibly corrupt the computation (and thus send it to $P_\bot$ subspace) changes (i.e. $E_i$ differs).

\noindent\textbf{ Step 5}: We will now use the partition of the qubits of the dotted triple-graph, to the subsets $P_v,A_e$ corresponding to vertices and edges of the base graph. The way that this partition is chosen does not reveal any new information to the prover and does not depend on the choice of trap-colouring, i.e. on the positions of the traps.

We have established that the optimal strategy for the prover in order to maximise the value of the bound for the $p_{\mathrm{fail}}$ we compute, is to make a single non-trivial attack at one qubit of the dotted triple-graph. Let us assume that this single position is $\beta$ and we know that it belongs to either a set $P_{v_\beta}$ or a set $A_{e_\beta}$ depending on whether the non-trivial attack is done on a qubit belonging to a primary set $P_{v_\beta}$ or an added set $A_{e_\beta}$.  When it is not clear if the set is primary or added, we will use $F_\beta$ which simply means that $F_\beta=P_{v_\beta}$ if $\beta$ is at a primary location and $F_\beta=A_{e_\beta}$ if $\beta$ is at an added location.  

We then break the $p(T)$ which is the probability of different trap configurations, using the structure of the subsets $P_v,A_e$, i.e. $p(T)=p(t_1\in P_{v_1},t_2\in P_{v_2}, \cdots, t_k\in A_{e_1}, \cdots)$. Therefore, given a single attack at set $F_{\beta}$, we can sum over all the other sets (all the other positions do not appear in the remaining expression) and obtain $\sum_T p(T)=\sum_{t_\beta\in F_{\beta}}\sum_{t\notin F_\beta}p(T)=\sum_{t_\beta\in F_\beta}p(t_\beta)$. We obtain

\EQ{
p_{\mathrm{fail}}&\leq&  \max_{i\in E_i} \sum_{t_\beta\in F_\beta}\sum_{\theta_{t_\beta},r_{t_\beta}} p(t_\beta) p(\theta_{t_\beta})p(r_{t_\beta})(\bra{\eta_{t_\beta}^{\nu}}\sigma_{i|t_\beta}\ket{\eta_{t_\beta}^{\nu}})^2.
}
It is important to note that $\sigma_{i|t_\beta}$ is identity if $\beta\neq t_\beta$ while it is non-trivial otherwise, therefore $(|F_\beta|-1)$ terms of the sum will be unity, while one term will be less than one\footnote{It turns out that the not-unity term, is zero for measured qubits, while it can be up to 1/2 for output qubits.}. The above expression depends on whether the set $F_\beta$ is output set, or in the case that is a measured set one on whether it is a primary or added set. It will be the prover that chooses which is the set of the attack, and thus the bound will be the highest of these values. We consider separately each case. We define the quantity 

\EQ{\label{def_g}
g(i,F_\beta)= \sum_{t_\beta\in F_{t_\beta}}\sum_{\theta_{t_\beta},r_{t_\beta}} p(t_\beta) p(\theta_{t_\beta})p(r_{t_\beta})(\bra{\eta_{t_\beta}^{\nu}}\sigma_{i|t_\beta}\ket{\eta_{t_\beta}^{\nu}})^2
} 
where the function $g$ has explicit the dependence on which set $F_\beta$ does the non-trivial attack belong to. In particular, we will denote $P_{v_\beta}^O$ if the non-trivial attack is on an output set (note that output qubits are only primary), and $P_{v_{\beta'}}$ if it is on a measured primary set and $A_{e_{\beta'}}$ if it is on a measured added set. We will separately compute the maximum of $g(i,P_{v_\beta}^O),g(i,P_{v_{\beta'}}),g(i,A_{e_{\beta'}})$ for $i\in E_i$ and the bound will be the maximum of those three.

We start with the output qubits

\EQ{
g(i,P_{v_\beta}^O)&=& \frac1{16|P_{v_\beta}^O|}\sum_{t_\beta\in P_{v_\beta}^O}\sum_{\theta_t,r_t}(\bra{+_{\theta}}\sigma_{i|t_\beta}\ket{+_\theta})^2 \nonumber\\
&=& \frac1{16\times 3}\sum_{\theta_t,r_t}\left(1\cdot (|P_{v_\beta}^O|-1)+1\cdot(\bra{+_{\theta}}\sigma_{i|t_\beta}\ket{+_\theta})^2\right) \nonumber\\
&\leq &  \frac1{16\times 3}\left(16\cdot (|P_{v_\beta}^O|-1)+8\right) \nonumber\\
&\leq & \left(1-\frac1{2|P_{v_\beta}^O|}\right)=\frac56 
}
where we used that $\sum_\theta (\bra{+_\theta}\sigma\ket{+_\theta})^2\leq 4$ for $\sigma\neq \mathbb{I}$ and that $|P_{v_\beta}^O|=3$ since output qubits are primary.

Now we consider the measured qubits similarly (using $F_{\beta'}$ to denote either primary or added measured set), to obtain 

\EQ{
g(i,F_{\beta'})&=& \frac1{16|F_{\beta'}|}\sum_{t_\beta\in F_{\beta'}}\sum_{\theta_t,r_t}(\bra{r_{t_\beta}}\sigma_{i|t_\beta}\ket{r_{t_\beta}})^2 \nonumber\\
&=& \frac1{16|F_{\beta'}|}\sum_{r_t}\left(8\cdot (|F_{\beta'}|-1)+8\cdot(\bra{r_{t_\beta}}\sigma_{i|t_\beta}\ket{r_{t_\beta}})^2\right) \nonumber\\
&= &  \frac1{16|F_{\beta'}|}\left(16\cdot (|F_{\beta'}|-1)\right) \nonumber\\
&= & \left(1-\frac1{|F_{\beta'}|}\right) \leq \frac 89
}
where the last step the equality holds if the attack is on added qubits where $|F_{\beta'}|=|A_{e_{\beta'}}|=9$. For primary measured sets $P_{v_{\beta'}}$ the bound is only $(2/3)$ and thus is lower. It follows that the overall bound we obtain (worse-case) is

\EQ{
p_{\textrm{fail}}\leq \left(\frac89\right)
}
Since this bound is obtained when the attack is on a measured (added) qubit, the bound is the same when the output is fully classical.

\bibliography{report}
\bibliographystyle{unsrt}

\end{document}